\newif\ifNJDarticle
	\journal{Journal}
	\xpatchcmd{\proof}{\itshape}{\prooflabelfont}{}{}
	\newcommand{\prooflabelfont}{\bfseries}
\DeclareMathAlphabet{\stdmathcal}{OMS}{cmsy}{m}{n}
\newcolumntype{P}[1]{>{\centering\arraybackslash}p{#1}}
\def\BibTeX{{\rm B\kern-.05em{\sc i\kern-.025em b}\kern-.08em
    T\kern-.1667em\lower.7ex\hbox{E}\kern-.125emX}}
\newtheorem{defn}{Definition}
\newtheorem{thm}{Theorem}
\newtheorem{prop}{Proposition}
\newtheorem{lem}{Lemma}
\newtheorem{rem}{Remark}
\newtheorem{cor}{Corollary}
\newtheorem{ass}{Assumption}
\newtheorem{myfp}{Feasibility Problem} 
\newtheorem{ex}{Example}
\theoremstyle{definition}
\newtheorem{defn}{Definition}
\newtheorem{thm}{Theorem}
\newtheorem{prop}{Proposition}
\newtheorem{lem}{Lemma}
\newtheorem{cor}{Corollary}
\newtheorem{ass}{Assumption}
\newtheorem{ex}{Example}
\newcommand{\R}{\mathbb{R}}
\newcommand{\N}{\mathbb{N}}
\newcommand{\diag}{\text{diag}}
\newcommand{\eps}{\varepsilon}
\newcommand{\abs}[1]{\lvert{#1}\rvert}
\begin{document}

\ifNJDarticle
	
	\title{Feedback Linearisation with State Constraints}
	
	\author[1]{Songlin Jin}

	\author[2]{Yuanbo Nie}
	
	\author[3]{Morgan Jones}

	\address[1]{\orgdiv{School of Electrical and Electronic Engineering}, \orgname{University of Sheffield}, \orgaddress{\state{Sheffield}, \country{UK}}, \email{sjin16@sheffield.ac.uk}}

	\address[2]{\orgdiv{School of Electrical and Electronic Engineering}, \orgname{University of Sheffield}, \orgaddress{\state{Sheffield}, \country{UK}}, \email{y.nie@sheffield.ac.uk}}
	
	\address[3]{\orgdiv{School of Electrical and Electronic Engineering}, \orgname{University of Sheffield}, \orgaddress{\state{Sheffield}, \country{UK}}, \email{morgan.jones@sheffield.ac.uk}}
	
	
	\corres{Songlin Jin. \email{sjin16@sheffield.ac.uk}}
	\authormark{Jin \textsc{et al.}}
	\titlemark{Feedback Linearisation with State Constraints}

    \else
    
  \title{Feedback Linearisation with State Constraints}
  \date{}
\author{
  Songlin Jin%
  \thanks{\footnotesize S. Jin is with the School of Electrical and Electronic Engineering,
  The University of Sheffield. E-mail: {\tt \small sjin16@sheffield.ac.uk}}
  \quad
  Yuanbo Nie%
  \thanks{\footnotesize Y. Nie is with the School of Electrical and Electronic Engineering,
  The University of Sheffield. E-mail: {\tt \small y.nie@sheffield.ac.uk}}
  \quad
  Morgan Jones%
  \thanks{\footnotesize M. Jones is with the School of Electrical and Electronic Engineering,
  The University of Sheffield. E-mail: {\tt \small morgan.jones@sheffield.ac.uk}}
}

  \maketitle

    \fi
	
    \ifNJDarticle
    \abstract[Abstract]{Feedback Linearisation (FBL) is a widely used technique that applies feedback laws to transform input-affine nonlinear control systems into linear control systems, allowing for the use of linear controller design methods such as pole placement. However, for problems with state constraints, controlling the linear system induced by FBL can be more challenging than controlling the original system. This is because simple state constraints in the original nonlinear system become complex nonlinear constraints in the FBL induced linearised system, thereby diminishing the advantages of linearisation. To avoid increasing the complexity of state constraints under FBL, this paper introduces a method to first augment system dynamics to capture state constraints before applying FBL. We show that our proposed augmentation method leads to ill-defined relative degrees at state constraint boundaries. However, we show that ill-defined relative degrees can be overcome by using a switching FBL controller. Numerical experiments illustrate the capabilities of this method for handling state constraints within the FBL framework.}
	\keywords{Nonlinear systems, feedback linearisation, state constraints, asymptotically tracking control.}
    \else
    \abstract{Feedback Linearisation (FBL) is a widely used technique that applies feedback laws to transform input-affine nonlinear control systems into linear control systems, allowing for the use of linear controller design methods such as pole placement. However, for problems with state constraints, controlling the linear system induced by FBL can be more challenging than controlling the original system. This is because simple state constraints in the original nonlinear system become complex nonlinear constraints in the FBL induced linearised system, thereby diminishing the advantages of linearisation. To avoid increasing the complexity of state constraints under FBL, this paper introduces a method to first augment system dynamics to capture state constraints before applying FBL. We show that our proposed augmentation method leads to ill-defined relative degrees at state constraint boundaries. However, we show that ill-defined relative degrees can be overcome by using a switching FBL controller. Numerical experiments illustrate the capabilities of this method for handling state constraints within the FBL framework.}
    \fi
	
	\maketitle

\section{Introduction}
Nonlinear systems pervade every facet of our world, from turbulent flows~\cite{lin2024data} to the sophisticated operations of human-made cybernetic infrastructure~\cite{dolgui2019scheduling}. Unlike linear systems, nonlinear systems do not obey the solution superposition property leading to phenomena such as non-global stability, limit cycles and sensitivity to small initial changes (chaos). These properties make the control of nonlinear systems significantly more challenging than that of linear systems. Nevertheless, there exist several techniques that approximate or transform challenging nonlinear problems into simpler linear problems, such as Jacobian/Taylor
linearisation, Koopman operators~\cite{koopman1931hamiltonian}, Carleman linearisation~\cite{carleman1932application} and Feedback Linearisation. FBL is a fairly mature topic based on the early works~\cite{krener1977decomposition,brockett1978feedback} from the seventies and has been extensively treated in textbooks~\cite{khalil2002control}. 

FBL, also known as Nonlinear Dynamic Inversion in special cases~\cite{Looye2001DesignOR}, is widely used in aerospace applications such as flight control~\cite{miller2011nonlinear}. FBL works by designing a controller to cancel the nonlinearities within the system dynamics, enabling the application of well-established linear control techniques to the remaining ``virtual input". For a system with a single affine input $u$ and single state $x$, $\dot{x}(t)=f(t,x(t))+g(t,x(t))u$, assuming $g(t,x) \ne 0$, FBL is as simple as applying the controller $u=\frac{\nu-f(t,x)}{g(t,x)}$, where $\nu$ is the virtual input designed to control the induced linear system $\dot{x}(t)=\nu$.
For Multiple Input Multiple Output (MIMO) systems, FBL involves the use of geometric techniques to find the relationship between the output and the input by taking the time derivatives of the output. Derivatives are taken until the system input appears explicitly within the expression, this order of derivative is known as the \textbf{relative degree} of the system. 
Input-output linearisation (IOL) is one of the most common formulations of FBL~\cite{khalil2002nonlinear}. In IOL, the output function is fixed, and the linearisation is carried out on this predetermined output, and hence the system may only be partially linearised (that is, there may not exist a diffeomorphism~\cite{hirsch2012differential} mapping between the state space of the linearised system and the original state variables). In classical FBL, there is more freedom in the choice of output function, and full linearisation of the system, via a diffeomorphism, can be achieved by selecting an appropriate output. The method presented here is general enough to encompass both settings. Although the problem formulation (Section~\ref{sec: prob form}) includes an output function for concreteness, this function is not predetermined and may be freely chosen. We therefore adopt the term ``FBL'' throughout the paper.

For nonlinear control, FBL provides an optimisation-free analytical controller whose stability properties can be analysed~\cite{schumacher1998stability}. In contrast, other nonlinear optimal control methods usually require significant computational resources involving solving a nonlinear programming problem~\cite{betts2010practical} or a nonlinear PDE such as the Euler-Lagrange equation~\cite{liu2020geometric} or HJB PDE~\cite{jones2024model}. However, arguably the popularity of FBL when compared to alternative nonlinear control techniques has waned in recent times due to the following challenges: 1) FBL requires exact knowledge of the model of the system and is sensitive to uncertainties~\cite{wen2024feedback}. 2) The feedback linearisation transformation maps simple inputs and state constraints to complex nonlinear constraints, limiting the direct use of linear control methods to the FBL induced system~\cite{deng2009input}.

Significant effort has been made to overcome the first challenge of improving FBL robustness to model uncertainty, adaptive FBL controllers have been developed, and the stability impact of model uncertainties has been studied through a robust control theory framework~\cite{chou1995robust,shojaei2011adaptive,gonzalez1999robust}. This work focuses on the second challenge of using FBL under state constraints.

To overcome the second challenge, several methods have been developed to handle input constraints, such as computing inner polytope approximations~\cite{simon2013nonlinear}, integral invariant controllers~\cite{konstantopoulos2023resilient} adding hedging to the reference signal, known as the pseudo control hedging~\cite{johnson2000pseudo}, and transforming nonlinear control allocation problems into linear ones to handle actuator saturation~\cite{enenakpogbe2025control}. However, very few results have been published regarding the use of FBL in the presence of state constraints. Existing results~\cite{van2006robust,gong2006pseudospectral,schnelle2015constraint,pant2016robust,ruiz2024design} use optimisation based control, such as Model Predictive Control (MPC) or Linear Matrix Inequalities (LMIs), with inner approximations of the transformed state constraint and/or time discretisation, leading to significant computational burdens that limit feasibility for online applications. Alternatively, under certain assumptions, a PID controller can be derived such that the system is guaranteed to satisfy box state constraints~\cite{konstantopoulos2020state}. In contrast to these methods, our proposed method does not require the solution of an optimisation problem during online implementation and is not limited to simple box constraints, it can be implemented for general time-varying state constraints. 


\ifNJDarticle
	\begin{figure*}
		\begin{tikzpicture}
			\tikzset{row1/.style={
					minimum width=160pt,
					minimum height=16pt,
					draw}}
			\tikzset{row2/.style={
					minimum width=160pt,
					minimum height=100pt,
					draw}}
			\tikzset{row3/.style={
					minimum width=170pt,
					minimum height=15pt,
					draw}}
			\tikzset{row4/.style={
					minimum width=170pt,
					minimum height=100pt,
					draw}}
			\tikzset{square/.style={rectangle, 
					draw=black!60, 
					fill=blue!5, 
					thick}}
			\tikzset{arrow/.style={rectangle, 
					draw=red!60, 
					fill=red!5, 
					thick}}
			
			\node[row1, fill=cyan!5] (C1_1) {Constraint Capture};
			\node[row1, anchor=west] at (C1_1.east) (C2_1) {Feedback Linearisation};
			\node[row3, anchor=west, fill=cyan!5] at (C2_1.east) (C3_1) {Controller Implementation};
			\node[row2, anchor=south, below=0, fill=cyan!5] at (C1_1.south) (C1_2) {};
			\node[row2, anchor=south, below=0] at (C2_1.south) (C2_2) {};
			\node[row4, anchor=south, below=0, fill=cyan!5] at (C3_1.south) (C3_2) {};
			
			\node[square, below = -47.5, fill=green!15, align=left] at (C1_2) (comp rho_k_eps) {Consider Problem \eqref{system's ODE}-\eqref{state ineq cons}.\\
				Sequentially augment system~\eqref{system's ODE},\\ iteratively alternating between \\ capturing each constraint, \\constructing system~\eqref{augmented sys}, and\\ modifying with {integral} {control} \\ constructing \eqref{augmented int sys} (see Algorithm~\ref{Alg: aug}).};
			
			\node[square, below = -42, fill=green!15, align=left] at (C2_2) (comp l) {Design a FBL controller~\eqref{solution of the FBL} \\ that cancels the nonlinearity of \\ the integral augmented system~\eqref{augmented int sys} \\ via a  ``virtual input", $\nu$, and \\ stabilise the linear error \\ dynamics~\eqref{ODE: FBL linear system} (see Algorithm~\ref{Alg: FBL controller synthesis}).};
			
			\node[square, below = -42, fill=green!15, align=left] at (C3_2) (comp pi) {
				Implement the FBL controller on the \\ original system coupled with integral \\ states, where augmented variables \\ are sequentially computed via \\ Eqs.~\eqref{eq: augmented states replacement}. Switch between FBL \\controllers according to Eq.~\eqref{eq: switching condition}.};
		
	
	\draw[->, thick] (comp rho_k_eps) -- (comp l);
	\draw[->, thick] (comp l) -- (comp pi);
\end{tikzpicture}
\caption{Flowchart for closed-form controller synthesis and implementation: Augment the system to incorporate constraints while introducing an integral controller structure (Section~\ref{sec: constraints trans}). Design an FBL controller to cancel system nonlinearities and stabilise the linearised tracking error dynamics (Section~\ref{sec: FBL}). Finally, implement the FBL controller that depends only on the original states and the integral states, not the augmented state (Section~\ref {sec: overcoming ill-defined rel deg}).}
\label{Fig. controller-solving flowchart}
\vspace{-10pt}
\end{figure*}

\else
\begin{figure*}
\begin{tikzpicture}
		\tikzset{row1/.style={
					minimum width=4.3cm,
					minimum height=15pt,
					draw}}
        \tikzset{row1b/.style={
					minimum width=4.3cm,
					minimum height=6.7cm,
					draw}}
		\tikzset{row2/.style={
					minimum width=3.8cm,
					minimum height=15pt,
					draw}}
        \tikzset{row2b/.style={
					minimum width=4.25cm,
					minimum height=6.74cm,
					draw}}
			\tikzset{row3/.style={
					minimum width=3.0cm,
					minimum height=15pt,
					draw}}
            \tikzset{row3b/.style={
					minimum width=4.8cm,
					minimum height=6.7cm,
					draw}}
			\tikzset{square/.style={rectangle, 
					draw=black!60, 
					fill=blue!5, 
					thick}}
			\tikzset{arrow/.style={rectangle, 
					draw=red!60, 
					fill=red!5, 
					thick}}
			
			\node[row1, fill=cyan!5] (C1_1) {Constraint Capture};
			\node[row2, anchor=west] at (C1_1.east) (C2_1) {Feedback Linearisation};
			\node[row3, anchor=west, fill=cyan!5] at (C2_1.east) (C3_1) {Controller Implementation};
			\node[row1b, anchor=south, below=0, fill=cyan!5] at (C1_1.south) (C1_2) {};
			\node[row2b, anchor=south, below=0] at (C2_1.south) (C2_2) {};
			\node[row3b, anchor=south, below=0, fill=cyan!5] at (C3_1.south) (C3_2) {};
			
			\node[square, below = -90, fill=green!15, align=left, text width=3.6cm] at (C1_2) (comp rho_k_eps) {Consider Problem~\eqref{system's ODE}-\eqref{state ineq cons}.
		  Sequentially augment system~\eqref{system's ODE}, iteratively alternating between capturing each constraint, constructing system~\eqref{augmented sys}, and modifying with {integral} {control} constructing~\eqref{augmented int sys} (see Algorithm~\ref{Alg: aug}).};
			
			\node[square, below = -82, fill=green!15, align=left, text width=3.5cm] at (C2_2) (comp l)
            {Design a FBL controller \eqref{solution of the FBL} that cancels the nonlinearity of the integral augmented system~\eqref{augmented int sys} via a ``virtual input", $\nu$, and stabilise the linear error dynamics \eqref{ODE: FBL linear system} (see Algorithm~\ref{Alg: FBL controller synthesis}).};
			
			\node[square, below = -77, fill=green!15, align=left, text width=4.1cm] at (C3_2) (comp pi) { Implement the FBL controller on the original system coupled with integral states, where augmented variables are sequentially computed via Eqs.~\eqref{eq: augmented states replacement}. Switch between FBL controllers according to Eq.~\eqref{eq: switching condition}.};
	
	\draw[->, thick] (comp rho_k_eps) -- (comp l);
	\draw[->, thick] (comp l) -- (comp pi);
\end{tikzpicture}

\caption{Flowchart for closed-form controller synthesis and implementation: Augment the system to incorporate constraints while introducing an integral controller structure (Section~\ref{sec: constraints trans}). Design an FBL controller to cancel system nonlinearities and stabilise the linearised tracking error dynamics (Section~\ref{sec: FBL}). Finally, implement the FBL controller that depends only on the original states and the integral states, not the augmented state (Section~\ref {sec: overcoming ill-defined rel deg}).}
\label{Fig. controller-solving flowchart}
\vspace{-10pt}
\end{figure*}
\fi

The primary contribution of this paper is to propose a general FBL method capable of handling state constraints without the need for any online optimisation. 
\ifNJDarticle
The following three aspects summarise the specific contributions of the paper:
    \begin{itemize}
    \item \textbf{Scalable Constraint Handling:} We extend the work of~\cite{jacobson1969transformation} to cases where the constraint dimension is greater than the input dimension by introducing the novel method of sequentially capturing the constraints into system dynamics, and then an unconstrained augmented system is derived. For the first time, an analytic expression for the transformation in~\cite{jacobson1969transformation} is derived and shown in Lemma~\ref{lem: deriv of constraints}. 
    \item \textbf{Fundamental Controller Design Principles:} Our main result, Theorem~\ref{thm: augmentation implies constraint sat}, establishes the fundamental principles for designing implementable controllers that obey state constraints. This theorem reveals that the properties of \emph{integrability} and \emph{invertibility} of virtual controllers are key when certifying constraint satisfaction under our state augmentation approach.

    Firstly, we develop an integral controller framework in Section~\ref{subsec: integral} to ensure a bounded controller. Next, we resolve an inherent limitation of the transformation technique in~\cite{jacobson1969transformation}, more specifically, in Proposition~\ref{prop: boundary and invertibility}, we theoretically prove that the transformation may lead to ill-defined relative degrees at the boundary of the constraints. Therefore, in Sections~\ref{subsec: constCapturewithNRD} and~\ref{sec: FBL implementation}, a controller switching strategy~\cite{tomlin1998switching}~\cite{chen2002switching} is introduced to overcome the invertibility challenge caused by ill-defined relative degrees.
    
    \item \textbf{Establishing a Practical Implementation Workflow:} Moving beyond existing results in the literature~\cite{tomlin1998switching}~\cite{chen2002switching}, we provide a definitive controller implementation method in Section~\ref{subsec: implementation} that bridges the gap between theory and practice, in which back-substitution is employed to extinguish the dependency on augmented states, thereby enabling a switchable FBL controller to be implemented on the original system.
    \end{itemize}
    \else
    The following three aspects summarise the specific contributions of the paper:
    \begin{itemize}
    \item \textbf{Scalable Constraint Handling:} We extend the work of~\cite{jacobson1969transformation} to cases where the constraint dimension is greater than the input dimension by introducing the novel method of sequentially capturing the constraints into system dynamics, and then an unconstrained augmented system is derived. For the first time, an analytic expression for the transformation in~\cite{jacobson1969transformation} is derived and shown in Lemma~\ref{lem: deriv of constraints}. 
    \item \textbf{Fundamental Controller Design Principles:} Our main result, Theorem~\ref{thm: augmentation implies constraint sat}, establishes the fundamental principles for designing implementable controllers that obey state constraints. This theorem reveals that the properties of \emph{integrability} and \emph{invertibility} of virtual controllers are key when certifying constraint satisfaction under our state augmentation approach.

    Firstly, we develop an integral controller framework in Section~\ref{subsec: integral} to ensure a bounded controller. Next, we resolve an inherent limitation of the transformation technique in~\cite{jacobson1969transformation}, more specifically, in Proposition~\ref{prop: boundary and invertibility}, we theoretically prove that the transformation may lead to ill-defined relative degrees at the boundary of the constraints. Therefore, in Sections~\ref{subsec: constCapturewithNRD} and~\ref{sec: FBL implementation}, a controller switching strategy~\cite{tomlin1998switching}~\cite{chen2002switching} is introduced to overcome the invertibility challenge caused by ill-defined relative degrees.
    
    \item \textbf{Establishing a Practical Implementation Workflow:} Moving beyond existing results in the literature~\cite{tomlin1998switching}~\cite{chen2002switching}, we provide a definitive controller implementation method in Section~\ref{subsec: implementation} that bridges the gap between theory and practice, in which back-substitution is employed to extinguish the dependency on augmented states, thereby enabling a switchable FBL controller to be implemented on the original system.
    \end{itemize}
    \fi
Furthermore, our method is summarised in the flowchart given in Fig.~\ref{Fig. controller-solving flowchart}, follows three steps: 1) Augment the dynamics to capture state constraints and modify with integral control structure, 2) Apply FBL to the integral augmented system to derive an FBL controller, 3) Implement the FBL controller where augmented state dependencies are removed via algebraic substitution.



Extending the state-space by augmenting the system dynamics has previously been used to solve discrete-time problems with non-additively separable cost functions~\cite{jones2017solving,jones2020extensions}. Our work focuses on tracking problems in continuous time rather than minimising general cost functions, using state augmentation to encapsulate state constraints. We follow and extend the approach of~\cite{jacobson1969transformation} by introducing a slack variable to represent the square root of the proximity to the state constraint boundary. Analogous to the first step of FBL, we differentiate this equation to establish a relationship between the slack variable and the system input. Solving for the input, we obtain a new higher-dimensional system where the virtual input corresponds to the highest derivative of the slack variable. We show that the augmented system will not violate state constraints for any integrable input. We overcome the limitation of~\cite{jacobson1969transformation}, that the number of input dimensions must match the number of constraints required, by proposing a sequential approach where we iteratively augment the state space to capture constraints. We also differ from~\cite{jacobson1969transformation} by imposing an integral controller structure to ensure any input used is integrable and hence cannot violate the state constraints. Moreover, unlike in~\cite{jacobson1969transformation}, which uses HJB equations to solve for the controller, we synthesise a closed-loop controller using FBL and analyse the associated challenges that arise from this approach. 



Unfortunately, it is non-trivial to apply FBL out of the box to the resulting augmented system, since we show that the augmented system has an ill-defined relative degree, changing its value along the state constraint boundary. 
Therefore, in order to apply FBL, we use the switching controller from~\cite{tomlin1998switching}~\cite{chen2002switching} that was designed to handle cases of ill-defined relative degree. 
Finally, we synthesise a state constraint feasible closed-loop controller for the original system coupled with integral states by removing the augmented states such that the controller only depends on the states of the original system and the induced integral states. In Fig.~\ref{Fig: ty_JE_uncon}, we implement the proposed method on a Single-Input Single-Output (SISO) system studied in~\cite{jacobson1969transformation}, and stabilise the state trajectory to the origin while guaranteeing that the state constraint is satisfied, in contrast to the classical FBL method, where the constraints are violated. We show more implementation details in Example~\ref{ex: Jacob Ex2} and manually derive the controller following our proposed method.
\begin{figure} 
	\vspace{-0pt}
	\centering 
	\includegraphics[width=0.6\textwidth]{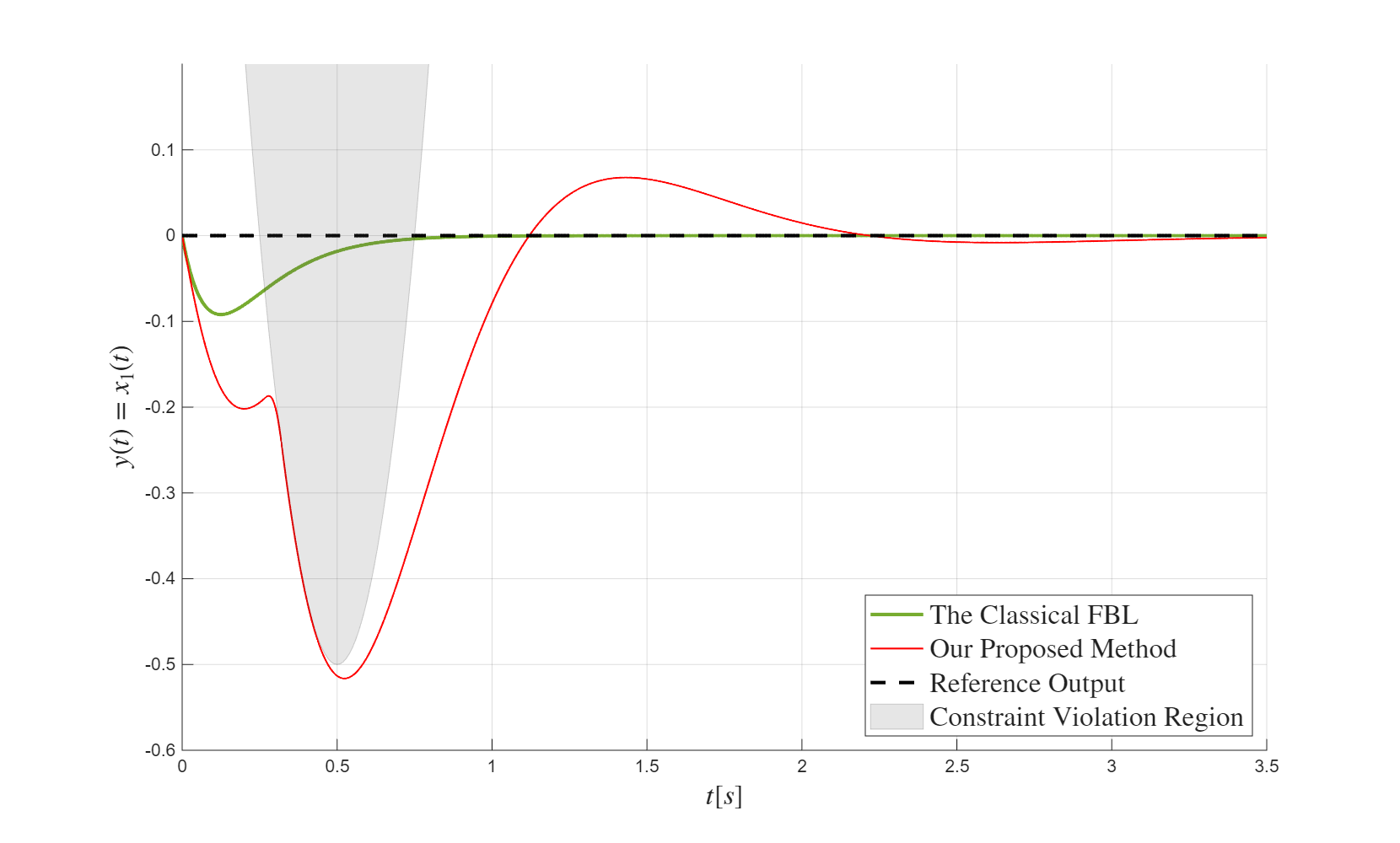}
	\caption{Figure associated with Example~\ref{ex: Jacob Ex2} showing the trajectory deduced by our proposed method against that of the classical FBL method.}
	\label{Fig: ty_JE_uncon}
	\vspace{-5pt}
\end{figure}

While our underlying methodology is applicable to MIMO systems, the implementation involving a switching controller is restricted to SISO systems. Accordingly, Sections~\ref{sec: prob form} to \ref{sec: FBL}, which extend the state augmentation approach of~\cite{jacobson1969transformation} and analyse the complications that arise when combining this method with FBL, are presented in the general MIMO setting. From Section~\ref{sec: overcoming ill-defined rel deg} onwards, however, we focus exclusively on SISO systems, where the practical implementation of the FBL controller, particularly the switching logic, is more tractable.

\noindent \textbf{\textit{Notation:}} 
For $v \in \R^n$ we denote each component of the vector as $v_i$ or $[v]_i$. If $v \in \R^n$ we say $v \le 0$ if $v_i \le 0$ for each $i \in \{1,\dots,n\}$, $v\ne 0$ if $v_i \ne 0$ for each $i \in \{1,\dots,n\}$ and $v=0$ if $v_i=0$ for each $i \in \{1,\dots,n\}$. Let $||\cdot||_2$ denotes Euclidean norm or $L^2$ norm operator. We define $I_m$ as a $m$ by $m$ identity matrix and $0_{p \times m}$ as a $p$ by $m$ matrix with all elements being zero. For $v \in \R^n$ we denote $D=\diag(v^\top)$ to be the diagonal such that $D_{i,i}=v_i$ and $D_{i,j}=0$ for all $i \ne j$. 
Similarly, we overload the notation $D=\diag(A_1,\dots,A_n)$ to denote the block diagonal matrix such that matrices $A_1,\dots,A_n$ lying along the main diagonal and other elements of the matrix $D$ are equal to zero.
For $i \in \{1,\dots,p\}$ the vector zeros with only the i-th component of value $1$ is defined by
\begin{align} \label{I_r^k defn}
    \mathcal I_{p,i}:=[0_{1 \times (i-1)},1,0_{1 \times (p-i)}] \in \R^{1 \times p}.
\end{align}
We denote the set of $k$-differentiable functions, $f:X \to Y$ by $C^k(X,Y)$, with $C^\infty(X,Y)$ denoting the set of infinitely differentiable functions. Along the same lines, we denote $L^1(X,Y)$ as the set of Lebesgue integrable functions. Given $F\in C^1([0,\infty) \times \R^n,\R^r)$ and $x \in C^1([0,\infty),\R^n)$ we define $\frac {dF(t,x(t))}{dt}=\frac{\partial F(t,x(t))}{\partial x}\frac{\partial x(t)}{\partial t}+\frac{\partial F(t,x(t))}{\partial t}$, where $\frac{\partial F(t,x)}{\partial x}=[\frac{\partial F(t,x)}{\partial x_1},\dots,\frac{\partial F(t,x)}{\partial x_n}] \in \R^{1 \times n}$ is a row vector. We define the 0-th derivative as $\frac{d^0}{dt^0}F(t,x)=F(t,x)$ and $\frac{\partial^0}{\partial t^0}F(t,x)=F(t,x)$. 
Consider $\psi \in C^i([0,\infty) \times \R^n,\R^m)$, $\psi^{(i)}(t,x(t))$ denotes the i-th order time derivative of $\psi(t,x(t))$. 
Given $f\in C^1([0,\infty) \times \R^n,\R^n)$, $g \in C^1([0,\infty) \times \R^n, \R^{n \times m})$ and $h \in C^1([0,\infty) \times \R^n,\R^r)$, for $i \in \{1,\dots,r\}$, we denote the Lie derivatives of the i-th component in $h$ by $L_f^k h_i \in \R$ and $L_{g_j}L_f^k h_i \in \R$ that are given in Eqs.~\eqref{eq: Lie derivatives}, 
\begin{align} \label{eq: Lie derivatives}
	&L_f^0h_i(t,x)=h_i(t,x), \nonumber \\ 
    &L_f^kh_i(t,x)=\frac{\partial L_f^{k-1}h_i(t,x)}{\partial x}f(t,x)+\frac{\partial L_f^{k-1}h_i(t,x)}{\partial t} \text{ for } k \in \N, \nonumber \\
	& L_{g_j}L_f^jh_i(t,x)=\frac{\partial L_f^kh_i(t,x)}{\partial x }g_j(t,x) \text{ for } k \in \N \cup \{0\},
\end{align}
where $g_j$ denotes the j-th column in $g$ for $j \in \{1,\dots,m\}$. Then, we define a collection vector $L_gL_f^kh_i(t,x):=[L_{g_1}L_f^kh_i(t,x),\dots,L_{g_m}L_f^kh_i(t,x)]\in \R^{1 \times m}$.
This definition of Lie derivative is useful when dealing with time-varying systems, see~\cite{palanki1997controller} for more details.

\section{Problem Formulation:\\ Asymptotic Constrained Tracking} \label{sec: prob form}

Consider a square input-affine system, that is, the number of inputs equals the number of outputs, denoted by the tuple \begin{align} \label{tuple} \Sigma = (f, g, h, x_0),\end{align} 
where $f \in C^\infty([t_0, \infty) \times \mathbb{R}^n, \mathbb{R}^n)$, $g \in C^\infty([t_0, \infty) \times \mathbb{R}^n, \mathbb{R}^{n \times m})$, $h \in C^\infty([t_0, \infty) \times \mathbb{R}^n, \mathbb{R}^m)$, $t_0 \ge 0$ and $x_0 \in \R^n$. The state, input, and output at time $t \ge 0$ of $\Sigma$ are denoted by $x(t) \in \mathbb{R}^n$, $u(t) \in \mathbb{R}^m$, and $y(t) \in \mathbb{R}^m$ respectively, and the dynamics are governed by the following Initial Value Problem (IVP):
\begin{align} 
    \label{system's ODE}
	& \dot{x}(t)=f(t,x(t))+g(t,x(t))u(t) \text{, } \quad x_0=x(t_0), \\
    &y(t)=h(t,x(t)) \nonumber. 
\end{align}
ODE~\eqref{system's ODE} is commonly expanded as $\dot{x}(t)=f(t,x(t))+\sum_{i=1}^{m}g_i(t,x(t))u_i(t)$~\cite{isidori1985nonlinear} to demonstrate the correspondence between the input matrix and control vector fields, where $g_i \in \R^n$ denotes the i-th column in $g$.

For simplicity, throughout this paper we will assume that there exists a unique continuous solution, $x(t)$, to the ODE~\eqref{system's ODE} for all $t \ge 0$ and $u \in L^1([t_0, \infty),\R^m)$. This assumption is non-restrictive, for instance, when the vector field of the ODE is Lipschitz continuous, the solution map exists over a finite time interval. Moreover, this interval can be extended arbitrarily if the solution remains within a compact set, as elaborated in~\cite{khalil2002control}.


\textbf{Problem (Asymptotic Tracking with state constraints):}
Given a feasible initial condition $(t_0,x_0) \in [0,\infty) \times \R^n$, satisfying $\phi(t_0,x_0) \le 0$, where $\phi \in C^\infty([t_0, \infty) \times \R^n, \R^r)$, and a smooth reference signal $y_r \in C^\infty([t_0, \infty), \R^m)$ {find a feedback controller that ensures the output of ODE~\eqref{system's ODE} tracks the reference asymptotically, while the closed loop trajectory, $x(t)$, satisfies the state constraints:}
\vspace{-0.4cm} 
\begin{align}
    \label{tracking goal}
    &\lim_{t \to \infty}  ||y(t)-y_r(t)||_2 = 0,\\
    \label{state ineq cons}
    &\phi(t,x(t)) \le 0 \text{ for all } t
    \in [0, \infty].
\end{align}

This problem is formulated to accommodate controller synthesis methods that introduce auxiliary integral states during design, such as the method proposed in this paper. These states, which are not present a priori, are constructed alongside the controller to enforce properties such as input boundedness. 

The above tracking control problem is also a generalisation of the classical tracking problem~\cite{ha1987robust}, finding a controller that drives the output of the system to asymptotically track a reference signal while keeping the internal states, $x(t)$, of the closed-loop system bounded. Clearly, when $\phi(t,x)=||x||_2^2-R^2$, for sufficiently large $R>0$, the above problem becomes the classical tracking problem. Classically, the bounded state constraint is not directly addressed, rather, a tracking controller is derived, and then, under various assumptions, such as stable zero dynamics~\cite{isidori2013zero}, it can be proven that the closed-loop system has bounded internal states. In this paper, we propose a fundamentally different approach where we design a controller based on the knowledge of the system constraints, captured by $\phi$. In the next section, we detail how to augment the state space to include information about the state constraints.

\section{State Augmentation: From \\ Constrained to Unconstrained} \label{sec: constraints trans}
In this section, we recall and extend the approach of~\cite{jacobson1969transformation} that introduced a technique of transforming a constrained tracking control problem into an unconstrained tracking control problem by augmenting/expanding the state dynamics to capture the constraints. More specifically, a slack variable is introduced that captures how close the system's constraint is to being violated. Then, the relationship between the slack variable and system input is derived by taking sufficiently many time derivatives. The number of time derivatives required to take is exactly equal to the number of time derivatives that can be taken of the constraint function before the system's input appears explicitly in the expression. This is called the relative degree of the function and is defined next. 
\begin{defn}[Relative Degree] \label{defn: relative deg}
 We say $\psi \in C^\infty([t_0,\infty) \times \R^n,\R^r)  $ has relative degree equal to $\rho=[\rho_1,\dots,\rho_r] \in \N^r$, with respect to the system given by the tuple $\Sigma=(f,g,h,x_0)$~\eqref{tuple}, if each  of the $k \in \{1,\dots,r\}$ components of $\rho$ satisfies the following two conditions:
	\ifNJDarticle
    
	1) For all $(t,x) \in [t_0,\infty) \times \R^n$ we have $L_gL_f^i\psi_k(t,x)=0$ for $i \in \{0,\dots,\rho_k-2\}$,
	
	2) For all $(t,x) \in [t_0,\infty) \times \R^n$ we have $L_gL_f^{\rho_k-1}\psi_k(t,x) \neq 0$.\\
	Recalling that Lie derivatives were defined in Eq.~\eqref{eq: Lie derivatives}. 
    
    \else
    
    1) For all $(t,x) \in [t_0,\infty) \times \R^n$, $L_gL_f^i\psi_k(t,x)=0$ for $i \in \{0,\dots,\rho_k-2\}$,
	
	2) For all $(t,x) \in [t_0,\infty) \times \R^n$, $L_gL_f^{\rho_k-1}\psi_k(t,x) \neq 0$.\\
	Recalling that Lie derivatives were defined in Eq.~\eqref{eq: Lie derivatives}. 
    \fi
\end{defn}

Let us now consider a system $\Sigma=(f,g,h,x_0)$~\eqref{tuple} with state constraint given in Eq.~\eqref{state ineq cons}. For simplicity, we initially assume that the dimension of the constraint function, $\phi$, is equal to the dimension of the input, $u$, that is
\begin{align} \label{r equal m}
 \text{ }    r=m.
\end{align} 
See Section~\ref{subsec: sequential aug} for the generalisation to $r=\alpha m$, where $\alpha \in \N$. To augment the dynamics and capture the state constraint, we follow \cite{jacobson1969transformation} by introducing a time-varying slack variable given by the following equation,
\begin{align} \label{equality constraint x}
	\phi(t,x(t))+Z(t)=0 
\end{align}
where $x(t)$ is the solution to IVP~\eqref{system's ODE} and $Z:=[\frac{1}{2}z_1^2,\dots,\frac{1}{2}z_r^2]^\top \in \R^r$. Note, if $Z(t)=0$ for some $t>0$, then $\phi(t,x(t))=0$, implying the state trajectory, $x(t)$, is on the boundary of the state constraint. 

The slack variable, $Z$, depends explicitly on the state of the system through Eq.~\eqref{equality constraint x} and thus implicitly depends on the input of the system. By computing the explicit expression that relates the system input to the slack variable, we may substitute the input variable for the slack variable, treating the derivative of the slack variable as a pseudo-input variable of the transformed system. Then, the transformed system will remain state-feasible for any value taken by the pseudo-input, since by the quadratic nature of the slack variable it follows from Eq.~\eqref{equality constraint x} that $\phi(t,x(t)) \ge 0$ for any $t \in [t_0,\infty)$. 
The slack variables $z_i$ for all $i \in \{1,\dots,r\}$ are augmented to form new state-space coordinates for the transformation system.
In the work of~\cite{jacobson1969transformation}, the time derivatives of Eq.~\eqref{equality constraint x} were not analytically derived. Next, we extend this work by deriving an analytical expression for the time derivatives of Eq.~\eqref{equality constraint x} to obtain an expression that relates the slack variable to the system input.

\begin{lem} \label{lem: deriv of constraints}
	Consider the IVP~\eqref{system's ODE} and a smooth function $\phi \in C^\infty([t_0, \infty) \times \R^n, \R^r)$. Suppose the relative degree (Definition~\ref{defn: relative deg}) of $\phi$ is $\rho \in \N^r $ and $z_k \in C^{\rho_k}([t_0,\infty),\R)$ for each $k \in \{1,\dots,r\}$. Then, for $k \in \{1,\dots,r\}$ we have that,
    \ifNJDarticle
	\begin{align} \label{eq: time derivative before rel deg}
   & \frac{d^{i}}{dt^{i}}\left(\phi_k(t,x(t)) +\frac{z^2_k(t)}{2} \right)=L_f^{i}\phi_k(t,x(t))+z_k(t)z_k^{(i)}(t) +\frac{1}{2}\sum_{j=1}^{i} {i \choose j} z_k^{(i-j)}(t) z_k^{(j)}(t) \text{ for } i \in \{1,\dots, \rho_k-1\} \\ 
   \label{time der of phik+zk = eta k}
  &\frac{d^{\rho_k}}{dt^{\rho_k}}\left(\phi_k(t,x(t)) +\frac{z^2_k(t)}{2} \right)=L_gL_f^{\rho_k-1}\phi_k(t,x(t))u(t) +L_f^{\rho_k}\phi_k(t,x(t))+z_k(t)z_k^{(\rho_k)}(t)+\frac{1}{2}\sum_{j=1}^{\rho_k} {\rho_k \choose j} z_k^{(\rho_k-j)}(t) z_k^{(j)}(t).
	\end{align}
    \else
    \begin{align} \label{eq: time derivative before rel deg}
   & \frac{d^{i}}{dt^{i}}\left(\phi_k(t,x(t)) +\frac{z^2_k(t)}{2} \right)=L_f^{i}\phi_k(t,x(t))+z_k(t)z_k^{(i)}(t) \\
   &\qquad \qquad \qquad \qquad \qquad \quad+\frac{1}{2}\sum_{j=1}^{i} {i \choose j} z_k^{(i-j)}(t) z_k^{(j)}(t) \text{ for } i \in \{1,\dots, \rho_k-1\} \nonumber \\ 
   \label{time der of phik+zk = eta k}
  &\frac{d^{\rho_k}}{dt^{\rho_k}}\left(\phi_k(t,x(t)) +\frac{z^2_k(t)}{2} \right)=L_gL_f^{\rho_k-1}\phi_k(t,x(t))u(t) +L_f^{\rho_k}\phi_k(t,x(t)) \\
  & \qquad \qquad \qquad \qquad \qquad \quad \text{ }+z_k(t)z_k^{(\rho_k)}(t)+\frac{1}{2}\sum_{j=1}^{\rho_k} {\rho_k \choose j} z_k^{(\rho_k-j)}(t) z_k^{(j)}(t). \nonumber
	\end{align}
    \fi
\end{lem}
\begin{proof}
From Definition~\ref{defn: relative deg}, it follows that the derivatives of $\phi_k(t,x(t))$ satisfy
\ifNJDarticle
 \begin{align*}
    \frac{d^i \phi_k(t,x(t))}{dt^i}=&\frac{\partial L_f^{i-1} \phi_k(t,x(t))}{\partial x(t)} \dot x(t)+\frac{\partial L_f^{i-1} \phi_k(t,x(t))}{\partial t} =L_f^{i}\phi_k(t,x(t)) \text{, for } i \in \{1,\dots,\rho_k-1\},  \\
	\frac{d^{\rho_k} \phi_k(t,x(t))}{dt^{\rho_k}}=&L_f^{\rho_k}\phi_k(t,x(t))+L_gL_f^{\rho_k-1}\phi_k(t,x(t))u(t). \nonumber
 \end{align*}
 \else
 \begin{align*}
    \frac{d^i \phi_k(t,x(t))}{dt^i}=&\frac{\partial L_f^{i-1} \phi_k(t,x(t))}{\partial x(t)} \dot x(t)+\frac{\partial L_f^{i-1} \phi_k(t,x(t))}{\partial t} \\
    =&L_f^{i}\phi_k(t,x(t)), \text{ for } i \in \{1,\dots,\rho_k-1\},  \\
	\frac{d^{\rho_k} \phi_k(t,x(t))}{dt^{\rho_k}}=&L_f^{\rho_k}\phi_k(t,x(t))+L_gL_f^{\rho_k-1}\phi_k(t,x(t))u(t). \nonumber
 \end{align*}
 \fi
 
By General Leibniz rule~\cite[p.~318]{olver1993applications}, 
\ifNJDarticle
$\frac{d^n}{dx^n} \left( u(x)v(x) \right) = \sum_{k=0}^{n} \binom{n}{k} \frac{d^{k} u(x)}{dx^{k}} \frac{d^{n-k} v(x)}{dx^{n-k}}$,
\else
$\frac{d^n u(x)v(x)}{dx^n} = \sum_{k=0}^{n} \binom{n}{k} \frac{d^{k} u(x)}{dx^{k}} \frac{d^{n-k} v(x)}{dx^{n-k}}$,
\fi
it follows that the $i \ge 0$ derivatives of $\frac{1}{2}z_k(t)^2$ satisfy
\ifNJDarticle
 \begin{align*}
     \frac{d^i}{dt^i}(\frac{1}{2}z_k(t)^2)&=\frac{1}{2}\sum_{j=0}^{i} {i\choose j} z_k^{(i-j)}(t) z_k^{(j)}(t) =z_k(t) z_k^{(i)}(t)+ \frac{1}{2}\sum_{j=1}^{i-1} {i\choose j} z_k^{(i-j)}(t) z_k^{(j)}(t).
 \end{align*} 
\else
\begin{align*}
	\frac{d^i}{dt^i}(\frac{1}{2}z_k(t)^2)&=\frac{1}{2}\sum_{j=0}^{i} {i\choose j} z_k^{(i-j)}(t) z_k^{(j)}(t) \\
	& =z_k(t) z_k^{(i)}(t)+ \frac{1}{2}\sum_{j=1}^{i-1} {i\choose j} z_k^{(i-j)}(t) z_k^{(j)}(t).
\end{align*} 
\fi
\end{proof}

For any feasible trajectory, there must exist slack variables such that Eq.~\eqref{equality constraint x} is satisfied. Thus, the quantity $\phi_k(t,x(t))+\frac{1}{2}z^2_k(t)$ must remain constant over time, that is, the Right Hand Side (RHS) of Eq.~\eqref{time der of phik+zk = eta k} is equal to zero. The RHS of Eq.~\eqref{time der of phik+zk = eta k} is affine in the system input variable for each $r \in \N$, thus if we make the following invertibility assumption, we can find an expression for the system input in terms of the slack variable.
\ifNJDarticle
\begin{ass} \label{ass: inverse Omega_g} The \textit{constraint decoupling matrix} defined as,
	\[\Omega_g(t,x):=[L_gL_f^{\rho_1-1}\phi_1(t,x)^\top, \dots,L_gL_f^{\rho_r-1}\phi_r(t,x)^\top]^\top \in \R^{r \times m} \] is invertible for all $(t,x) \in [t_0,\infty) \times \R^n$. Note, $r=m$ (see Eq.~\eqref{r equal m}) implying $\Omega_g$ is a square matrix.
\end{ass}
\else
\begin{ass} \label{ass: inverse Omega_g} The \textit{constraint decoupling matrix} defined as,
	\[\Omega_g(t,x):=[L_gL_f^{\rho_1-1}\phi_1(t,x)^\top, \dots,L_gL_f^{\rho_r-1}\phi_r(t,x)^\top]^\top \in \R^{r \times m} \] is invertible for all $(t,x) \in [t_0,\infty) \times \R^n$. Note, $r=m$ (see Eq.~\eqref{r equal m}) implying $\Omega_g$ is a square matrix.
\end{ass}
\fi
\ifNJDarticle
Defining $S_{k,i-1}(t):=\frac{1}{2}\sum_{j=1}^{i-1} {i \choose j} z_k^{(i-j)}(t) z_k^{(j)}(t)$, $\Omega_f(t,x):=[S_{1,\rho_1-1}+L_f^{\rho_1}\phi_1(t,x),\dots,S_{r,\rho_r-1}+L_f^{\rho_r}\phi_r(t,x)]^\top \in \R^r$, $w(t):=[z_1^{(\rho_1)}(t),\dots,z_r^{(\rho_r)}(t)]^\top \in \R^{r}$, ${D(z)}:=\diag(z_1,\dots,z_r) \in \R^{r \times r}$ allows us to write the system of equations arising from setting the RHS of Eq.~\eqref{time der of phik+zk = eta k} to be zero as:
\else
By defining $S_{k,i-1}(t):=\frac{1}{2}\sum_{j=1}^{i-1} {i \choose j} z_k^{(i-j)}(t) z_k^{(j)}(t)$, $\Omega_f(t,x):=[S_{1,\rho_1-1}+L_f^{\rho_1}\phi_1(t,x),\dots,S_{r,\rho_r-1}+L_f^{\rho_r}\phi_r(t,x)]^\top \in \R^r$, $w(t):=[z_1^{(\rho_1)}(t),\dots,z_r^{(\rho_r)}(t)]^\top \in \R^{r}$, ${D(z)}:=\diag(z_1,\dots,z_r) \in \R^{r \times r}$, we write the system of equations arising from setting the RHS of Eq.~\eqref{time der of phik+zk = eta k} to be zero as:
\fi
\begin{align} \label{eq: RHS zero}
    	\Omega_{g}(t,x(t))u(t) +\Omega_f(t,x(t))+D
 (z(t))w(t)=0.
\end{align}
We now define the extended augmented state space:
\begin{align} \label{eq: aug state space}
    x_{A}:=\begin{bmatrix} x \\ \tilde{z} \end{bmatrix} \in \R^{n_{A}},
\end{align}
\ifNJDarticle
where $\tilde{z}:=[z_1,\ldots,z_1^{(\rho_1-1)},\ldots,z_r,\ldots,z_r^{(\rho_r-1)}]^\top \in \R^{\sum_{k=1}^{r}\rho_k}$ and $ n_{A}:=n+\sum_{k=1}^{r}\rho_k$.
\else
where $\tilde{z}:=[z_1,\dots,z_1^{(\rho_1-1)},\dots,z_r,\dots,z_r^{(\rho_r-1)}]^\top \in \R^{\sum_{k=1}^{r}\rho_k}$ and $ n_{A}=n+\sum_{k=1}^{r}\rho_k$.
\fi

Under Assumption~\ref{ass: inverse Omega_g}, we solve Eq.~\eqref{eq: RHS zero} to derive a relationship between the full state feedback controller, $u$, and the pseudo feedback controller, $w$, induced by the state augmentation:
\vspace{-0.4cm} 
\begin{align} \label{original control}
	u(t,x_A)=-\Omega_{g}^{-1}(t,x)\bigg(\Omega_f(t,x)+D(z)
 w(t,x_A)\bigg).
\end{align}
For simplicity, throughout the paper, we overload the notations $u$ and $w$ to denote both input signals, depending only on time (i.e. open-loop controllers), and closed-loop controllers, depending on both time and space.

Upon substituting the controller $u(t,x_A(t))$, from Eq.~\eqref{original control}, into ODE~\eqref{system's ODE}, treating $w$ as a time-varying ``virtual input" and noting $\frac{d}{dt}z_i^{(j)}(t)=z_i^{(j+1)}(t)$, we derive the following unconstrained input-affine augmented system $\Sigma_A=(f_A,g_A,h_A,x_A(t_0))$ associated with the following IVP, 
\begin{align} \label{augmented sys}
    \dot{ x}_{A}(t)=& f_{A}(t, x_{A}(t))+ g_{A}(t, x_{A}(t)) w(t), \quad x_A(t_0)=\begin{bmatrix} x_0 \\ \tilde{z}(t_0) \end{bmatrix}, \\
    y(t)=& h_{A}(t, x_{A}(t)):=h(t,x(t)), \nonumber
\end{align}
where recalling the definition of $\mathcal I_{r,k} \in \R^{1 \times r}$ in Eq.~\eqref{I_r^k defn} we get,
\ifNJDarticle
\begin{align} \label{eq: augmented f and g}
     f_{A}(t,x_A)& :=\begin{bmatrix}
        f(t,x)-g(t,x)\Omega_{g}^{-1}(t,x)\Omega_{f}(t,x) \\
        [z_1^{(1)},\dots,z_1^{(\rho_1-1)}]^\top \\
        0 \\
        \vdots \\
        [z_r^{(1)},\dots,z_r^{(\rho_r-1)}]^\top \\
        0
    \end{bmatrix} \in \R^{n_A}, \quad
      g_A(t,x_A) :=\begin{bmatrix}
        -g(t,x)\Omega_{g}^{-1}(t,x)D(z) \\
        0_{(\rho_1-1) \times r} \\
        \mathcal I_{r,1} \\
        \vdots \\
        0_{(\rho_r-1) \times r} \\
        \mathcal I_{r,r}
    \end{bmatrix} \in \R^{n_A \times r}.
\end{align}
\else
\begin{align} \label{eq: augmented f and g}
     &f_{A}(t,x_A):=\begin{bmatrix}
        f(t,x)-g(t,x)\Omega_{g}^{-1}(t,x)\Omega_{f}(t,x) \\
        [z_1^{(1)},\dots,z_1^{(\rho_1-1)}]^\top \\
        0 \\
        \vdots \\
        [z_r^{(1)},\dots,z_r^{(\rho_r-1)}]^\top \\
        0
    \end{bmatrix} \in \R^{n_A}, \nonumber \\
      &g_A(t,x_A) :=\begin{bmatrix}
        -g(t,x)\Omega_{g}^{-1}(t,x)D(z) \\
        0_{(\rho_1-1) \times r} \\
        \mathcal I_{r,1} \\
        \vdots \\
        0_{(\rho_r-1) \times r} \\
        \mathcal I_{r,r}
    \end{bmatrix} \in \R^{n_A \times r}.
\end{align}
\fi

 \begin{defn} \label{def: aug map}
     Given a system $\Sigma=(f,g,h,x_0)$ (Eq.~\eqref{tuple}) and a constraint $\phi \in C^\infty([t_0, \infty) \times \R^n, \R^r)$ we define the following system augmentation map
     \begin{align*}
        \stdmathcal{T}_{\phi}(\Sigma)=(f_A,g_A,h_A,x_A(t_0)),
     \end{align*}
     where $(f_A,g_A,h_A,x_A(t_0))$ is associated with the IVP given in Eq.~\eqref{eq: augmented f and g}.
 \end{defn}


The feedback controller given in Eq.~\eqref{original control} provides the relationship between the original system input and the pseudo input resulting from augmentation. By a similar argument, we can find the relationship between the augmented states, $z_i$, and the original system states by equating the RHS of Eq.~\eqref{eq: time derivative before rel deg} to zero. Sequentially solving these equations, starting with Eq.~\eqref{equality constraint x}, gives us:
\begin{align} \label{eq: augmented states replacement}
    &z_k(t)=  \sqrt{-2\phi_k(t,x(t))} \\ \nonumber
    &z_k^{(i)}(t)=-\frac{L_f^i\phi_k(t,x(t))+S_{k,i-1}(t)}{z_k(t)} \text{ for } i \in \{1,\dots,\rho_k-1\}.
\end{align}
The initial condition, $ x_A(t_0)$, of the augmented system can then be computed by simply iterating Eqs~\eqref{eq: augmented states replacement} to compute $z_k^{(j)}(t_0)$ for each $j$ and $k$.

We next show that for any integrable input, $w$, the controller given in Eq.~\eqref{original control} yields a state trajectory that obeys the state constraint given in Eq.~\eqref{state ineq cons}. This is an intuitive result since as we approach the boundary of the state constraint, the slack variable becomes zero, $z=0_{r \times 1}$. Since the virtual input, $w$, is multiplied by $D(z):=\diag(z_1,\dots,z_r)$ in Eq.~\eqref{original control}, it follows that there is a loss of actuation as we approach the constraint boundary making violation impossible.
\begin{thm} \label{thm: augmentation implies constraint sat}
    Consider the IVP given in Eq.~\eqref{augmented sys}. Under Assumption~\ref{ass: inverse Omega_g}, \textbf{for any integrable input} $w \in L^1([t_0,T],\R^r)$ it follows that the first component, $x(t)$, of the resulting state trajectory, $x_A(t)$, satisfies the state constraint,
    \begin{align} \label{eqthm: state constraint}
        \phi(t,x(t)) \le 0 \text{ for all } t \in [t_0,T].
    \end{align}
\end{thm}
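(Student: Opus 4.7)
The strategy is to show that the auxiliary quantity $\Psi_k(t) := \phi_k(t,x(t)) + \tfrac{1}{2}z_k^2(t)$ vanishes identically along every trajectory of~\eqref{augmented sys}, for each $k \in \{1,\dots,r\}$. Since $z_k^2(t) \ge 0$, this immediately gives $\phi_k(t,x(t)) = -\tfrac{1}{2}z_k^2(t) \le 0$, which is exactly the componentwise constraint~\eqref{eqthm: state constraint}. The augmentation was designed precisely for this invariance, so the proof is a verification that two ingredients hold: correct initial conditions on $\Psi_k$ and its first $\rho_k-1$ derivatives, and the vanishing of the top derivative $\Psi_k^{(\rho_k)}$ along trajectories.

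For the initial conditions, the state $x_A(t_0)$ is obtained by iterating the relations~\eqref{eq: augmented states replacement} at $t=t_0$. These relations were derived precisely by setting the right-hand side of~\eqref{eq: time derivative before rel deg} to zero, so Lemma~\ref{lem: deriv of constraints} immediately gives $\Psi_k^{(i)}(t_0) = 0$ for every $i \in \{0,1,\dots,\rho_k-1\}$. For the top derivative, I fix any integrable input $w \in L^1([t_0,T],\R^r)$. The chain-of-integrators block in~\eqref{augmented sys} makes $z_k,\dot z_k,\dots,z_k^{(\rho_k-1)}$ absolutely continuous on $[t_0,T]$, and $x(\cdot)$ is absolutely continuous by the standing well-posedness assumption. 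Thus the derivatives appearing in~\eqref{time der of phik+zk = eta k} exist almost everywhere, and substituting the feedback law~\eqref{original control}, which was constructed precisely so that $\Omega_g(t,x)u + \Omega_f(t,x) + D(z)w = 0$, annihilates the right-hand side of~\eqref{time der of phik+zk = eta k}. Hence $\Psi_k^{(\rho_k)}(t) = 0$ for almost every $t \in [t_0,T]$.

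To close the argument I would integrate $\Psi_k^{(\rho_k)}(t) = 0$ from $t_0$ to $t$ exactly $\rho_k$ times, using the fundamental theorem of calculus for absolutely continuous functions together with the initial-condition equalities $\Psi_k^{(i)}(t_0) = 0$ applied at each stage. This cascade yields $\Psi_k(t) \equiv 0$ on $[t_0,T]$, finishing the proof. The main obstacle I anticipate is the regularity bookkeeping: Lemma~\ref{lem: deriv of constraints} is stated for $z_k \in C^{\rho_k}$, whereas under $w \in L^1$ we only have $z_k^{(\rho_k-1)}$ absolutely continuous and $z_k^{(\rho_k)}$ defined merely almost everywhere. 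This is a routine measure-theoretic adjustment rather than a conceptual difficulty, but it is the step that must be handled with care to make the chain-of-integration argument fully rigorous.
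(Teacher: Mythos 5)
Your proposal is correct and follows essentially the same route as the paper's proof: both establish the vanishing initial conditions via Eqs.~\eqref{eq: augmented states replacement}, use the feedback law~\eqref{original control} together with Lemma~\ref{lem: deriv of constraints} to annihilate the $\rho_k$-th derivative of $\phi_k+\tfrac12 z_k^2$, integrate $\rho_k$ times, and conclude from nonnegativity of the squared slack. The only cosmetic difference is in the final regularity step you flag: the paper settles it by noting the identity holds almost everywhere and then upgrading to all $t$ via a continuity-and-positive-measure contradiction argument applied directly to $\phi(\cdot,x(\cdot))$, whereas you absorb it into the absolute-continuity bookkeeping of the integration cascade; both are valid.
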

\begin{proof}
By Assumption~\ref{ass: inverse Omega_g} it follows $\phi$ has relative degree $\rho$ (Definition~\ref{defn: relative deg}). Given an integrable input, $w \in L^1([t_0,T],\R^r)$, let $z_i^{(\rho_i)}(t):=w_i(t)$. Since the virtual input $w$ is integrable, it follows by the fundamental theorem of calculus that each of the anti-derivatives of $w$ is absolutely continuous (see Corollary 3.33 in~\cite[p.~105]{folland1999real}). Hence, for each $1\le i \le \rho_j$ there exists an absolutely continuous function $z_i^{(j-1)}(t):=\int_{t_0}^t z_i^{(j)}(s) ds +z_i^{(j-1)}(t_0)$ almost everywhere. Define $u$ according to Eq.~\eqref{original control}. By rearranging Eq.~\eqref{original control}, we get Eq.~\eqref{eq: RHS zero}. From Lemma~\ref{lem: deriv of constraints} it then follows 
    \begin{align*}
        \frac{d^{\rho_k}}{dt^{\rho_k}}\left(\phi_k(t,x(t)) +\frac{1}{2}z^2_k(t) \right)=0 \text{ for each } k \in \{1,\dots,r\}.
    \end{align*}
    By integrating the above equation $\rho_k$ times over $[t_0,T]$ for any $t_0<t<T$ and applying the initial conditions derived from setting $t=t_0$ in Eqs.~\eqref{eq: augmented states replacement}, we deduce that the equation given in Eq.~\eqref{equality constraint x} holds almost everywhere. Since $Z:=[\frac{1}{2}z_1^2,\dots,\frac{1}{2}z_r^2]^\top \in \R^r$ and any real number squared is positive, it follows that the state constraint~\eqref{eqthm: state constraint} holds almost everywhere. 
    
    Now, since $x(\cdot)$ and $\phi$ are continuous it follows $\phi(\cdot,x(\cdot))$ is continuous. By contradiction suppose there exists $t_0 \in [t_0,T]$ such that the state constraint~\eqref{eqthm: state constraint} does not hold, that is $\phi(t_0,x(t_0))>0$. By continuity there exists $\eps>0$ and a time interval $I_t \subset [t_0,T]$ such that $\phi(t,x(t))\ge\eps>0$ for all $t \in I_t$. However, since any interval has non-zero measure, this contradicts that the state constraint~\eqref{eqthm: state constraint} holds almost everywhere. Therefore, it must follow that the state constraint~\eqref{eqthm: state constraint} holds for all $t \in [t_0,T]$.
\end{proof}
Note, Theorem~\ref{thm: augmentation implies constraint sat} applies to the original system~\eqref{system's ODE} after the augmented states are eliminated via back-substitution using Eq.~\eqref{eq: augmented states replacement}.

\vspace{-0.3cm}
\subsection{Ensuring an Integrable Input} \label{subsec: integral}
Theorem~\ref{thm: augmentation implies constraint sat} shows that the augmented system, $\Sigma_A$, defined in Eq.~\eqref{augmented sys} satisfies the constraint $\phi(t, x(t)) \leq 0$ for any integrable input. To ensure any applied controller is integrable, we further modify the augmented system by introducing an integral controller structure. Specifically, we expand the state space by augmenting it with a new integrator state, defined as $\dot{\xi}(t) = \tilde{w}(t)$, and replace the input in the augmented system $\Sigma_A$ with a bounded function of the integral state. That is
\begin{align} \label{eq: aug input and integral}
    w(t)=s_\beta(\xi(t)),
\end{align} where $s_\beta: \mathbb{R} \to [-\beta, \beta]^r$ is some bounded function. In this work, we consider the following choice:
\begin{align} \label{eq: sigma}
s_
{\beta,i}(x) := 2\beta \left( \frac{1}{e^{-x} + 1} - 0.5 \right) \text{ for } 1 \le i \le r.
\end{align}
The selection of $\beta>0$ is equivalent deciding on an upper bound of the input and will have an impact on controller synthesis, this is discussed later in Section~\ref{subec: FBL on aug_sys}.

We now further enlarge the augmented state space in Eq.~\eqref{eq: aug state space} to include the integral state:
\begin{align} \label{eq: aug int state space}
    x_{I}:=\begin{bmatrix} x_A \\ \xi \end{bmatrix}= \begin{bmatrix} x\\ \tilde{z} \\ \xi \end{bmatrix} \in \R^{n_{A}+m},
\end{align}

The new virtual input is given by $\tilde{w}$, and we denote the resulting input-affine system with integral controller structure by $\Sigma_I=(f_I,g_I,h_I,x_I(t_0))$. The associated IVP with this system is: 
\begin{align} \label{augmented int sys}
    \dot{ x}_{I}(t)=& f_{I}(t, x_{I}(t))+ g_{I}(t, x_{I}(t)) \tilde{w}(t), \quad x_I(t_0)=\begin{bmatrix} x_A(t_0) \\ \xi(t_0) \end{bmatrix}, \\ \nonumber
    y(t)=& h_{I}(t, x_{I}(t)):=h(t,x(t)), 
\end{align}
where
\begin{align*}
    f_I(t,x_I)=\begin{bmatrix}
        f_A(t,x_A) + g_A(t,x_A)s_\beta(\xi) \\ 0_{m \times 1} 
    \end{bmatrix}, \quad g_I(t,x_I)=\begin{bmatrix}
        0_{n_A \times m} \\ I_{m \times m}
    \end{bmatrix},
\end{align*}
$f_A$ and $g_A$ are given in Eq.~\eqref{eq: augmented f and g}. 

\ifNJDarticle
 \begin{defn} \label{def: int map}
     Given a system $\Sigma_A=(f_A,g_A,h_A,x_A(t_0))$ (Eq.~\eqref{augmented sys}) and a integration bound $\beta$, we define the following system integral controller map
     \begin{align*}
         \stdmathcal{I}_{\beta} (\Sigma_A)=(f_I,g_I,h_I,x_I(t_0)),
     \end{align*}
     where $(f_I,g_I,h_I,x_I(t_0))$ is associated with the IVP given in Eq.~\eqref{augmented int sys}.
 \end{defn}
 \else
 \begin{defn} \label{def: int map}
     Given a system $\Sigma_A=(f_A,g_A,h_A,x_A(t_0))$ (Eq.~\eqref{augmented sys}) and a integration bound $\beta$, we define the following system integral controller map
     \begin{align*}
         \stdmathcal{I}_{\beta} (\Sigma_A)=(f_I,g_I,h_I,x_I(t_0)),
     \end{align*}
     where $(f_I,g_I,h_I,x_I(t_0))$ is associated with the IVP given in Eq.~\eqref{augmented int sys}.
 \end{defn}
 \fi

After synthesising the tracking controller, $\tilde{w}$, for the IVP given in the integral augmented system $\Sigma_I$ (Eq.~\eqref{augmented int sys}), we can then apply the integral control~\eqref{eq: aug input and integral} and the feedback controller~\eqref{original control} to derive a closed-loop feedback controller corresponding to the input of the original ODE~\eqref{system's ODE}:
\begin{align} \label{original control int aug}
	& u(t,x_I)  =-\Omega_{g}^{-1}(t,x)\bigg(\Omega_f(t,x)+D(z)
 s_\beta(\xi )\bigg)\\ \nonumber
 &\text{where the } \xi \text{ evolves according to the FBL controller:} \\ \label{ODE: xi integral state}
 & \dot{\xi}(t)=\tilde{w}(t,x_I).
\end{align}
Note, later in Section~\ref{subsec: FBL with poles}, we will see that FBL controller synthesis is formulated independently of the initial conditions. In particular, altering the initial condition of the integrator state, $\xi(t_0)$, results only in an initial offset in the controller input signal in Eq.~\eqref{original control int aug} and will not alter the asymptotic tracking FBL induces under Assumption~\ref{ass: inverse Gamma g}. Therefore, $\xi(t_0)$ is a free variable that the user can tune to bias the initial input signal, $u(t_0)$, and thus the initial direction of the state trajectory. For simplicity, we will select a zero offset by setting $u=0$ in Eq.~\eqref{original control int aug} to derive
\begin{align} \label{eq: xi zero}
    \xi_0:=\xi(t_0)=s_\beta^{-1}\bigg(-D(z(t_0))^{-1} \Omega_f(t_0,x(t_0))\bigg),
\end{align}
where the inverse of $s_\beta$ is applied component wise, $z(t_0)$ is found through setting $t=t_0$ in Eqs.~\eqref{eq: augmented states replacement} and $x(t_0)=x_0$ from ODE~\eqref{system's ODE}.

Now, from the feedback controller~\eqref {original control int aug}, the virtual input $w(t)$ given in Eq.~\eqref{original control} becomes $w(t)=s_\beta(\xi(t))$.
Since $s_\beta$ is a bounded function, it follows that this virtual input is an element of $L^1([t_0,T],\R^r)$. Hence, we can apply Theorem~\ref{thm: augmentation implies constraint sat} to show that the system trajectory, $x(t)$, resulting from the application of the feedback controller given in Eq.~\eqref{original control int aug} does not violate the state constraint, that is, Eq.~\eqref{eqthm: state constraint} is satisfied.

\subsection{Sequential Augmentation} \label{subsec: sequential aug}
In this subsection, we consider the case when the dimension of the state constraints, \( r \), exceeds the dimension of the input space, \( m \). If $r > m$, the system of equations in Eq.~\eqref{eq: RHS zero}, which must be solved to derive the augmented system \( \Sigma_A \) in Eq.~\eqref{augmented sys}, becomes overdetermined, meaning there are more constraints than free variables. Consequently, this equation cannot be solved, and Assumption~\ref{ass: inverse Omega_g} no longer holds.  

However, when \( r = \alpha m \) for some \( \alpha \in \mathbb{N} \), the state space can still be augmented by employing a {sequential strategy}. Instead of encapsulating all constraints simultaneously, we consider only a subset of \( m \) constraints at a time, momentarily ignoring the rest. Specifically, we partition the state constraints as  
\begin{align} \label{eq: constraints}
  \phi = [\phi_1, \dots, \phi_\alpha]^\top, \quad \text{where each } \phi_i: [t_0,\infty) \times \R^n \to \R^m.  
\end{align}

The augmentation process begins by applying the transformation \( \stdmathcal{T}_{\phi_1} \), defined in Definition~\ref{def: aug map}, to obtain the first augmented system \( \Sigma_A = \stdmathcal{T}_{\phi_1}(\Sigma) \). Next, we introduce an integral controller structure by applying \( \stdmathcal{I}_{\beta} \), as defined in Definition~\ref{def: int map}, yielding \( \Sigma_I = \stdmathcal{I}_{\beta}(\Sigma_A) \). Treating \( \Sigma_I \) as the new base system, we repeat this process for each subsequent subset of constraints until all have been incorporated. The complete augmentation process is given by  
\begin{align} \label{eq: sequential augmentation}
    \Sigma_I = \stdmathcal{I}_{\beta}(\stdmathcal{T}_{\phi_\alpha}(\dots \stdmathcal{I}_{\beta}(\stdmathcal{T}_{\phi_1}(\Sigma)) \dots )).
\end{align}


\section{Feedback Linearisation} \label{sec: FBL}

Transformations \(\stdmathcal{T}_\phi\) and \(\stdmathcal{I}_{\beta}\), defined in Definitions~\ref{def: aug map} and~\ref{def: int map}, preserve the input-affine structure of system \(\Sigma\) given in Eq.~\eqref{system's ODE}, which allows us to perform the Feedback Linearisation (FBL) method to achieve the tracking goal given in Eq.~\eqref{tracking goal}. Thus, we first review the classical FBL method for general unconstrained input-affine systems in Section~\ref{subsec: FBL with poles}, and then analyse the challenges of applying FBL to the augmented systems \(\Sigma_A\) (Eqs.~\eqref{augmented sys}) and \(\Sigma_I\) (Eqs.\eqref{augmented int sys}) in Section~\ref{subec: FBL on aug_sys}. 


incorporating the augmented structure from the previous section, given in Eq.~\eqref {augmented sys}, to handle state constraints. \label{sec: FBL}
\subsection{FBL with Pole Placement} \label{subsec: FBL with poles}
Considering a system represented by Eq.~\eqref{system's ODE} and a reference signal, $y_r\in C^\infty([t_0,\infty),\R^m)$, we next recall the standard FBL approach that derives an expression that relates the tracking error $E(t):=y(t)-y_r(t)$ to the system input by taking successive time derivatives of tracking error.  Having obtained this expression, we can design a feedback law to cancel out any nonlinearity as well as terms not involving the tracking error. The remaining system will be linear allowing us to utilise linear controllers for stabilisation to ensure that tracking errors asymptotically approach zero. In this paper, we choose the pole-placement strategy as it is frequently used for linear system stabilisation, however, it does not mean that the pole-placement is the only standard strategy.

We introduce new notation, $\sigma=[\sigma_{1},\dots,\sigma_m]^\top \in \N^m$, for the relative degree of the output function, $h \in C^\infty([t_0,\infty) \times \R^n, \R^m)$. This allows us to distinguish between $\rho \in \R^r$, the relative degree of the constraint function, $\phi \in C^\infty([t_0,\infty) \times \R^n, \R^r)$, that is defined in Section~\ref{sec: constraints trans} with possibly differing dimensions to $\sigma$. 

For each $i \in \{1,\dots,m\}$ and $j \in \{1,\dots,\sigma_i-1\}$, it follows by repeatedly taking time derivatives of the tracking error, $E(t):=y(t)-y_r(t)$, and by the definition of the relative degree (Definition~\ref{defn: relative deg}) we have that:
\begin{align} \label{eq: error time derivative}
	&E_i^{(j)}(t)=L_f^j h_i(t,x)-y_{r_i}^{(j)}(t) \nonumber \\
	&E_i^{(\sigma_i)}(t)=L_f^{\sigma_i} h_i(t,x)+L_gL_f^{\sigma_i-1}  h_i(t,x(t))u(t)-y_{r_i}^{(\sigma_i)}(t).  
\end{align}
We next construct the expression relating the tracking error to the system input. To do this, we introduce the following notation for the stacked Lie derivatives,
\begin{align} 
    \label{Gamma_f defn}
	&\Gamma_f(t,x):=[L_f^{\sigma_1}h_1(t,x),\ldots,L_f^{\sigma_m}h_m(t,x)]^\top \in \R^m, \\
    &\Gamma_g(t,x):=[L_gL_f^{\sigma_{1}-1} h_1(t, x)^\top,\ldots,L_gL_f^{\sigma_m-1} h_m(t,x)^\top]^\top \in \R^{m \times m}, \nonumber \\
    &\Gamma_{r}(t):=[y_{r_1}^{(\sigma_{1})}(t),\ldots,y_{r_p}^{(\sigma_m)}(t)]^\top \in \R^m. \nonumber
\end{align}
Our vector and matrix notation in Eqs.~\eqref{Gamma_f defn} now allows us to collect the derivatives of the tracking error given in Eqs.~\eqref{eq: error time derivative} into the following matrix equation,
\ifNJDarticle
\begin{align} \label{eq: rel deg deriv of tracking error}
    \begin{bmatrix} E_1^{(\sigma_1)}(t) \\ \vdots \\ E_m^{(\sigma_m)}(t)  \end{bmatrix} = \Gamma_f(t,x(t)) + \Gamma_g(t,x(t))u(t) - \Gamma_{r}(t).
\end{align}
\else
\begin{align} \label{eq: rel deg deriv of tracking error}
    \begin{bmatrix} E_1^{(\sigma_1)}(t) \\ \vdots \\ E_m^{(\sigma_m)}(t)  \end{bmatrix} = \Gamma_f(t,x(t)) + \Gamma_g(t,x(t))u(t) - \Gamma_{r}(t).
\end{align}
\fi
We can now eliminate the nonlinearities if we make the following assumption,
\ifNJDarticle
\begin{ass} \label{ass: inverse Gamma g}
    The \textit{output decoupling matrix} defined as,
    \[\Gamma_g(t,x):=[L_gL_f^{\sigma_{1}-1} h_1(t, x)^\top,\ldots,L_gL_f^{\sigma_m-1} h_m(t,x)^\top]^\top \in \R^{m \times m}\] is invertible for all $(t,x) \in [t_0,\infty) \times \R^n$.
\end{ass}
\else
\begin{ass} \label{ass: inverse Gamma g}
    The \textit{output decoupling matrix} defined as,
    \[\Gamma_g(t,x):=[L_gL_f^{\sigma_{1}-1} h_1(t, x)^\top,\ldots,L_gL_f^{\sigma_m-1} h_m(t,x)^\top]^\top \in \R^{m \times m}\] is invertible for all $(t,x) \in [t_0,\infty) \times \R^n$.
\end{ass}
\fi 

The tracking error dynamics can now be linearised by using the controller $u=w(t,x)$, where
\begin{align} \label{solution of the FBL}
    w(t,x)=-\Gamma_g(t,x)^{-1} \left( \Gamma_f(t,x)-\Gamma_{r}(t)- \nu(t) \right),
\end{align}
and $\nu \in \R^m $ is a ``virtual input" that we will design using pole placement later. Substituting Eq.~\eqref{solution of the FBL} into Eq.~\eqref{eq: rel deg deriv of tracking error}, we get that
\ifNJDarticle
$ [E_1^{(\sigma_1)}(t), \dots, E_m^{(\sigma_m)}(t)]^\top =[\nu_1(t), \dots, \nu_m(t)]^\top$.
\else
$ [E_1^{(\sigma_1)}(t), \dots, E_m^{(\sigma_m)} (t)]^\top$ $ =[\nu_1(t), \dots, \nu_m(t)]^\top$.
\fi

To construct the state space of our linearised tracking error dynamics we define a column vector consisting of stacked tracking error derivatives, denoted by 
\begin{equation} \label{eq:tracking error}
    \textbf{E}(t):=[\textbf{E}_1^\top(t),\dots,\textbf{E}_m^\top(t)]^\top \in \R^{\sum_{i=1}^m \sigma_i},
\end{equation} where the i-th component is given by $\mathbf{E}_i(t):=[E_i(t),\dots,E_i^{(\sigma_i-1)}(t)]^\top \in \R^{\sigma_i}$ and $E_i(t):=y_i(t)-y_{r,i}(t)$. Applying the controller given in Eq.~\eqref{solution of the FBL} and noting $\dot{E}_{i}^{(j)}=E_{i}^{(j+1)}$ we derive the full linearised tracking error dynamics:
\begin{align} \label{ODE: FBL linear system}
    \dot{\mathbf{E}}(t)=A \textbf{E}(t) + B \nu(t), 
\end{align}
\ifNJDarticle
where $A \in \R^{\sum_{i=1}^m \sigma_i \times \sum_{i=1}^m \sigma_i}$ is a block diagonal matrix of the form $A=\diag(A_1, \dots, A_m)$ with  $A_i:=\begin{bmatrix}
	0_{(\sigma_i-1) \times 1} & I_{\sigma_i-1} \\
    0 & 0_{1 \times (\sigma_i-1)}
\end{bmatrix} \in \R^{\sigma_i \times \sigma_i}$ and $B \in \R^{\sum_{i=1}^m \sigma_i}$ is a block diagonal matrix of the form $B=\diag(B_1, \dots, B_m)$ with $B_i:=[0_{1 \times (\sigma_i-1)},1]^\top \in \R^{\sigma_i}$.
\else
where $A \in \R^{\sum_{i=1}^m \sigma_i \times \sum_{i=1}^m \sigma_i}$ is block diagonal in the form $A=\diag(A_1, ..., A_m)$ with $A_i:=\begin{bmatrix}
0_{(\sigma_i-1) \times 1} & I_{\sigma_i-1} \\
0 & 0_{1 \times (\sigma_i-1)}
\end{bmatrix} \in \R^{\sigma_i \times \sigma_i}$ and $B \in \R^{\sum_{i=1}^m \sigma_i}$ is block diagonal in the form $B=\diag(B_1, \dots, B_m)$ with $B_i:=[0_{1 \times (\sigma_i-1)},1]^\top \in \R^{\sigma_i}$.
\fi

\textbf{Determining Gains By Pole-Placement:} We now recall the classical pole-placement technique in the context of FBL. A wider more in-depth overview of this technique can be found in~\cite{williams2007linear}. To stabilise ODE~\eqref{ODE: FBL linear system} we use a feedback controller $\nu(t) = -K \mathbf{E}(t)$ where $K \in \R^{m \times \sum_{i=1}^m \sigma_i}$ is a gain matrix that has the block diagonal structure $K=\diag(K_1,\dots,K_m)$ with $K_i \in \R^{1 \times \sigma_i}$. Because of the block structure of ODE~\eqref{ODE: FBL linear system}, the system can be decoupled into $m$-subsystems with states $\mathbf{E}_i$. Each of these subsystems is in ``canonical/companion" form. Substituting 
\begin{equation} \label{eq: virtual input}
    \nu(t) = -K \mathbf{E}(t),
\end{equation} we get the following $i \in \{1,\dots,m\}$ subsystems $\dot{\mathbf{E}}_i(t)=(A_i-B_iK_i) \mathbf{E}_i(t)$, where
\begin{align} \label{eq: subsystem}
   A_i-B_iK_i=\begin{bmatrix}
0 & 1 & 0 & \dots & 0 \\
0 & 0 & 1 & \dots & 0 \\
\vdots & \vdots & \vdots & \ddots & \vdots \\
0 & 0 & 0 & \dots & 1 \\
-K_{i,1} & -K_{i,2} & -K_{i,3} & \dots & -K_{i,\sigma_i} \\
\end{bmatrix}.
\end{align}
Because of the ``canonical/companion" form, it then follows that the characteristic equation is $\det(\lambda I - (A_i-BK_i))=      \lambda^{\sigma_i}+K_{i,\sigma_i}\lambda^{\sigma_i-1}+\ldots+K_{i,2}\lambda+K_{i,1}$. Then, to select the poles of this system to be at 
\vspace{-0.3cm}\begin{align} \label{eq: poles}
    \bar \lambda_i:=[\bar \lambda_{i,1},\dots,\bar \lambda_{i,\sigma_i}] < 0,
\end{align}
we simply expand $(\lambda-\bar\lambda_{i,1}) \dots (\lambda-\bar \lambda_{i,\sigma_i})$ and equate with the characteristic equation, $\det(\lambda I - (A_i-BK_i))$. By equating the coefficients of the same powers of $\lambda$, we solve for the $K_{i,j}$'s, selecting the appropriate gains for this pole placement. We will later discuss the selection criteria for poles in Section~\ref{subsec: selection of parameters}.

Under Assumptions~\ref{ass: inverse Omega_g} and~\ref{ass: inverse Gamma g}, the feedback controller (Eq.~\eqref{original control int aug}) and FBL controller (Eq.~\eqref{solution of the FBL}) are synthesised offline from state augmentation and FBL, and then employed for online implementation. Figure~\ref{Fig. feedback controller synthesis flowchart} summarises the online implementation of the feedback controller in our method. More specifically, the FBL controller $\tilde w(t,x_I(t))$ is integrated via ODE~\eqref{ODE: xi integral state} to compute the integral state $\xi(t)$, which is eventually coupled into the feedback controller $u(t,x_I(t))$. The feedback controller $u(t,x_I(t))$ depends on the augmented state $x_A$, constructed from the original state $x$ through Eq.~\eqref{eq: augmented states replacement}, which is then implementable on the original system~\eqref{system's ODE}.
Recall that the initial condition $\xi_0=\xi(t_0)$ is computed by using Eq.~\eqref{eq: xi zero} after setting the feedback controller $u$ to zero manually.
\ifNJDarticle
\begin{figure*}
    \centering
		\begin{tikzpicture}
            \tikzset{row0/.style={
					minimum width=90pt,
					minimum height=16pt,
                    text width=90pt,
                    align=center,
                    rounded corners,
                    fill=cyan!5,
					draw}}
			\tikzset{row1/.style={
					minimum width=75pt,
					minimum height=16pt,
                    text width=75pt,
                    align=center,
                    rounded corners,
                    fill=cyan!5,
					draw}}

            \node[row0] (boxSys) {System: $\dot x(t)=f(t,x(t))+g(t,x(t))u(t)$, $y(t)=h(t,x(t))$ (Eq.~\eqref{system's ODE})};
            \node[row1, anchor=south, yshift=-90pt] at (boxSys.south) (boxAug) {State Augmentation (Eqs.~\eqref{eq: augmented states replacement},~\eqref{eq: sequential augmentation}, Definitions~\ref{def: aug map},~\ref{def: int map})};
            \node[row1, anchor=east, xshift=115pt] at (boxSys.east) (boxError) {Tracking Error Differentiation (Eq.~\eqref{eq:tracking error})};
            \node[row1, anchor=east, xshift=115pt] at (boxError.east) (boxFBLVI) {FBL Virtual Input (Eq.~\eqref{eq: virtual input})};
            \node[row1, anchor=east, xshift=238pt] at (boxAug.east) (boxFBL) {FBL controller (Eq.~\eqref{solution of the FBL})};
            \node[row1, anchor=south, yshift=-80pt] at (boxAug.south) (boxIntegrator) {Integrator (Eq.~\eqref{ODE: xi integral state})};
            \node[row1, anchor=west, xshift=-160pt] at (boxIntegrator.west) (boxSaturation) {Saturation Function (Eq.~\eqref{eq: sigma})};
            \node[row1, anchor=west, xshift=-160pt] at (boxAug.west) (boxFBC) {Feedback Controller (Eq.~\eqref{original control int aug})};

	\draw[->, thick] (boxSys) -- node[above] {$y(t)$} (boxError);
    \draw[->, thick] (boxSys) -- node[right] {$x(t)$} (boxAug);
    \draw[->, thick] ([yshift=1cm]boxError.north) -- node[right] {$y_r(t)$} (boxError.north);
    \draw[->, thick] (boxError) -- node[above] {$\textbf{E}(t)$} (boxFBLVI);
    \draw[->, thick] (boxFBLVI) -- node[right] {$\nu(t)$} (boxFBL);
    \draw[->, thick] (boxFBL) |- node[above, pos=0.8] {$\tilde w(t,x_I(t))$} (boxIntegrator);
    \draw[->, thick] (boxIntegrator) -- node[right] {$\xi(t)$} (boxAug);
    \draw[->, thick] (boxIntegrator) -- node[above] {$\xi(t)$} (boxSaturation);
    \draw[->, thick] (boxAug) -- node[above] {$x_I(t)$ (Eq.~\eqref{eq: aug int state space})} (boxFBL);
    \draw[->, thick] (boxAug) -- node[above] {$x_A(t)$ (Eq.~\eqref{eq: aug state space})} (boxFBC);
    \draw[->, thick] (boxSaturation) -- node[right] {$s_\beta(\xi(t))$} (boxFBC);
    \draw[->, thick] (boxFBC) |- node[above, pos=0.8] {$u(t,x_I(t))$} (boxSys);
\end{tikzpicture}
\caption{Block diagram showing the closed loop online implementation of the feedback controller $u(t,x_I) \in \R^m$ (Eq.~\eqref{original control int aug}). The augmented state space $x_A$ and the integral augmented state space $x_I$ are constructed from $x$ via Eqs.~\eqref{eq: augmented states replacement}. Taking derivatives of the tracking error and substituting them into Eq.~\eqref{eq: virtual input}, the virtual input $\nu(t) \in \R^m$ is obtained, which together with $x_I$, determines the FBL controller $\tilde w(t,x_I) \in \R^m$ via Eq.~\eqref{solution of the FBL}. This FBL controller is integrated through ODE~\eqref{ODE: xi integral state} to yield the integral state $\xi$. The feedback controller $u(t,x_I)$ is obtained by substituting $s_{\beta}(\xi)$ and $x_A$ into Eq.~\eqref{original control int aug}, and is then implemented in the original system~\eqref{system's ODE}.}
\label{Fig. feedback controller synthesis flowchart}
\vspace{-10pt}
\end{figure*}

\else
\begin{figure*}
	\centering
	\begin{tikzpicture}
		\tikzset{row0/.style={
				minimum width=80pt,
				minimum height=16pt,
				text width=80pt,
				align=center,
				rounded corners,
				fill=cyan!5,
				draw}}
		\tikzset{row1/.style={
				minimum width=75pt,
				minimum height=16pt,
				text width=75pt,
				align=center,
				rounded corners,
				fill=cyan!5,
				draw}}

		\node[row0] (boxSys) {System: $\dot x(t)=f(t,x(t))+g(t,x(t))u(t)$, $y(t)=h(t,x(t))$ (Eq.~\eqref{system's ODE})};
		\node[row1, anchor=south, yshift=-115pt] at (boxSys.south) (boxAug) {State Augmentation (Definitions~\ref{def: aug map},~\ref{def: int map}, Eqs.~\eqref{eq: augmented states replacement},~\eqref{eq: sequential augmentation})};
		\node[row1, anchor=east, xshift=147pt] at (boxSys.east) (boxFBLVI) {FBL Virtual Input (Eq.~\eqref{eq: virtual input})};
		\node[row1, anchor=north, yshift=90pt] at (boxFBLVI.north) (boxError) {Tracking Error Differentiation (Eq.~\eqref{eq:tracking error})};
		\node[row1, anchor=east, xshift=150pt] at (boxAug.east) (boxFBL) {FBL controller (Eq.~\eqref{solution of the FBL})};
		\node[row1, anchor=south, yshift=-75pt] at (boxAug.south) (boxIntegrator) {Integrator (Eq.~\eqref{ODE: xi integral state})};
		\node[row1, anchor=west, xshift=-150pt] at (boxIntegrator.west) (boxSaturation) {Saturation Function (Eq.~\eqref{eq: sigma})};
		\node[row1, anchor=west, xshift=-150pt] at (boxAug.west) (boxFBC) {Feedback Controller (Eq.~\eqref{original control int aug})};
	
		\draw[->, thick] (boxSys) |- node[below, pos=0.84] {$y(t)$} ([yshift=-0.8cm]boxError);
		\draw[->, thick] (boxSys) -- node[right] {$x(t)$} (boxAug);
		\draw[->, thick] ([yshift=1.5cm]boxError.north) -- node[right] {$y_r(t)$} (boxError.north);
		\draw[->, thick] (boxError) -- node[right] {$\textbf{E}(t)$} (boxFBLVI);
		\draw[->, thick] (boxFBLVI) -- node[right] {$\nu(t)$} (boxFBL);
		\draw[->, thick] (boxFBL) |- node[above, pos=0.8] {$\tilde w(t,x_I(t))$} (boxIntegrator);
		\draw[->, thick] (boxIntegrator) -- node[right] {$\xi(t)$} (boxAug);
		\draw[->, thick] (boxIntegrator) -- node[above] {$\xi(t)$} (boxSaturation);
		\draw[->, thick] (boxAug) -- node[above] {$x_I(t)$} (boxFBL);
		\draw[->, thick] (boxAug) -- node[below] {(Eq.~\eqref{eq: aug int state space})} (boxFBL);
		\draw[->, thick] (boxAug) -- node[above] {$x_A(t)$} (boxFBC);
		\draw[->, thick] (boxAug) -- node[below] {(Eq.~\eqref{eq: aug state space})} (boxFBC);
		\draw[->, thick] (boxSaturation) -- node[right] {$s_\beta(\xi(t))$} (boxFBC);
		\draw[->, thick] (boxFBC) |- node[above, pos=0.8] {$u(t,x_I(t))$} (boxSys);
\end{tikzpicture}
\caption{Block diagram showing the closed loop online implementation of the feedback controller $u(t,x_I) \in \R^m$ (Eq.~\eqref{original control int aug}). The augmented state space $x_A$ and the integral augmented state space $x_I$ are constructed from $x$ via Eqs.~\eqref{eq: augmented states replacement}. Taking derivatives of the tracking error and substituting them into Eq.~\eqref{eq: virtual input}, the virtual input $\nu(t) \in \R^m$ is obtained, which together with $x_I$, determines the FBL controller $\tilde w(t,x_I) \in \R^m$ via Eq.~\eqref{solution of the FBL}. This FBL controller is integrated through ODE~\eqref{ODE: xi integral state} to yield the integral state $\xi$. The feedback controller $u(t,x_I)$ is obtained by substituting $s_{\beta}(\xi)$ and $x_A$ into Eq.~\eqref{original control int aug}, and is then implemented in the original system~\eqref{system's ODE}.}
\label{Fig. feedback controller synthesis flowchart}
\vspace{-10pt}
\end{figure*}
\fi


\subsection{Challenges: Augmentation with FBL}\label{subec: FBL on aug_sys}

As we have seen in Section~\ref{subsec: FBL with poles}, FBL can be used to construct an asymptotically tracking controller, provided Assumption~\ref{ass: inverse Gamma g} holds, that is, the matrix \(\Gamma_g\) is invertible. In practice, FBL only requires \(\Gamma_g\) to be invertible along the closed-loop trajectory.

However, when applying FBL to the integral augmented system \(\Sigma_I\) defined in Eq.~\eqref{augmented int sys}, this assumption fails to hold due to two key structural issues that we list next, but leave the full details to the Appendix for brevity:

\ifNJDarticle
\begin{enumerate}[a)]
  \item \textbf{Boundary singularity.}  
    At the edge of the constraint set, some slack variables \(z_k\) become zero, leading to a loss of rank in \(\Gamma_{g_I}\), that is, \(\Gamma_{g_I}\) is not invertible along the state constraint boundary. This results in an ill-defined relative degree and prevents direct application of FBL (see Corollary~\ref{cor: int system non invert}).

  \item \textbf{$\beta$-dependent ill-conditioning.}  
    As the bound \(\beta > 0\) on the integral controller is reduced, the matrix \(\Gamma_{g_I}\) becomes increasingly ill-conditioned (see Lemma~\ref{lem: beta makes Gamma_{gI} singular}).
\end{enumerate}

\else
\begin{enumerate}[label=(\alph*)]
\item \textbf{Boundary singularity.}  
    At the edge of the constraint set, some slack variables \(z_k\) become zero, leading to a loss of rank in \(\Gamma_{g_I}\), that is, \(\Gamma_{g_I}\) is not invertible along the state constraint boundary. This results in an ill-defined relative degree and prevents direct application of FBL (see Corollary~\ref{cor: int system non invert}).

  \item \textbf{$\beta$-dependent ill-conditioning.}  
    As the bound \(\beta > 0\) on the integral controller is reduced, the matrix \(\Gamma_{g_I}\) becomes increasingly ill conditioned (see Lemma~\ref{lem: beta makes Gamma_{gI} singular}).
\end{enumerate}
\fi

Similarly to the invertibility issues discussed above, where Assumption~\ref{ass: inverse Gamma g} may fail to hold, preventing the application of FBL, it is also possible, depending on the system structure, that Assumption~\ref{ass: inverse Omega_g} does not hold. In such cases, the matrix \(\Omega_g\) becomes non-invertible, thereby preventing state augmentation and constraint capture.

Even when \(\Omega_g\) or \(\Gamma_{g_I}\) remains theoretically invertible, they may be difficult to invert under finite precision arithmetic computation if their values approach zero, leading to numerical instability. However, since both augmentation and FBL only require invertibility of \(\Omega_g\) and \(\Gamma_{g_I}\) along the closed-loop trajectory, it may still be feasible to apply these methods provided that the trajectory avoids regions where these matrices are non-invertible. If the system approaches a non-invertibility region, increasing \(\beta\) can help mitigate ill-conditioning of \(\Gamma_{g_I}\), as Lemma~\ref{lem: beta makes Gamma_{gI} singular} shows that \(\Gamma_{g_I}\) depends linearly on \(\beta\). In cases where the system trajectory passes directly through singular regions, we introduce in Section~\ref{sec: overcoming ill-defined rel deg} a switching strategy that adapts to the system’s \emph{local} relative degree. Note that we will consider SISO systems only in Section~\ref{sec: overcoming ill-defined rel deg}. If the singularity issue never occurs, our work prior to this section remains valid for MIMO systems, and an asymptotic tracking controller can be synthesised and implemented following Figure~\ref{Fig. feedback controller synthesis flowchart}.

\ifNJDarticle
\section{A Practical Approach for State Constrained FBL}
\else
\section{A Practical Approach for FBL with State Constraints}
\fi
\label{sec: overcoming ill-defined rel deg}
In this section, we will present a practical framework for the implementation of state constrained FBL for SISO systems, addressing the two challenges given in Section~\ref{subec: FBL on aug_sys}.


The key idea is to synthesise different closed-form controllers for regions with varying ``local" relative degrees and switch between them as the trajectory transitions through these regions. 
Following~\cite{tomlin1998switching,chen2002switching}, we restrict our focus to SISO systems ($m=1$), where $\Omega_g(t,x):=L_gL_f^{\rho_1-1}\phi_1(t,x)$ and $\Gamma_g(t,x)=L_gL_f^{\sigma-1} h(t,x)$ are both scalar. For brevity, we refer to~\cite{tomlin1998switching, chen2002switching} for conditions ensuring switching FBL controllers result in an asymptotic approximate tracking with a bounded tracking error based on the magnitude of the switching threshold.
This work focuses on implementing the switching strategy in the context of our state constrained FBL framework, where we apply a localised relative degree (Definition~\ref{defn: numerical relative deg}) to augment the system, capturing constraints, and perform FBL controller synthesis. 

\begin{defn}[The $\eps$-Numerical Relative Degree Function] \label{defn: numerical relative deg}
	Given a system $\Sigma=(f,g,h,x_0)$ (Eq.~\eqref{tuple}), a point $(t,x) \in [t_0,\infty) \times \R^n$, a differentiable function $\psi:[t_0,\infty) \times \R^n \to \R$ and a positive sequence $\{\eps_i\}_{i=1}^\infty \subset (0,\infty)$. We say the $\eps$-Numerical Relative Degree ($\eps$-NRD) at $(t,x)$ is $\hat{\rho} \in \N$ if the following two conditions are satisfied:
	
	1) $\abs{L_gL_f^i\psi(t,x)} \le \eps_i$ for $i \in \{0,\dots,\hat{\rho}-2\}$.
	
	2) $\abs{L_gL_f^{\hat{\rho}-1}\psi(t,x)} > \eps_{\hat{\rho}-1}$.
\end{defn}
Note, when $\eps_i=0$ holds for all $i \in \N$ and when the two conditions in Definition~\ref{defn: numerical relative deg} are enforced globally for all $(t,x) \in [t_0,\infty) \times \R^n$, the $\eps$-NRD becomes equivalent to the relative degree (Definition~\ref{defn: relative deg}). 

In the following subsections, we synthesise a tracking controller following the exact same steps detailed in the previous sections, replacing relative degrees with $\eps$-NRDs as follows:
\begin{enumerate}
    {\item Augment system $\Sigma=(f,g,h,x_0)$ (Eq.~\eqref{tuple}) to capture constraints based on  $\eps$-NRD, $\hat \rho$.
    \item Apply FBL to system $\Sigma_I$ (Eq.~\eqref{augmented int sys}) based on $\eps$-NRD, $\hat \sigma$, to find the FBL controller~\eqref{solution of the FBL approx}.}
    \item {Sequentially use Eqs.~\eqref{eq: augmented states replacement} to remove the augmented states in the FBL controller~\eqref{solution of the FBL approx}, obtaining a closed-loop controller. Couple the integral states to the original ODE~\eqref{system's ODE} and implement the derived closed-loop system.}
\end{enumerate}

\subsection{Constraint Capture Using $\eps$-Numerical Relative Degree}
\label{subsec: constCapturewithNRD}
To address potential numerical invertibility issues caused by an ill-defined relative degree when capturing state constraints, we follow the same differentiation procedure as in Eq.~\eqref{eq: time derivative before rel deg} and Eq.~\eqref{time der of phik+zk = eta k}, but now discard terms where the input is multiplied by a coefficient smaller than a user defined $\eps$-threshold. The number of derivatives required to ensure the coefficient of the input is sufficiently large, i.e.\ satisfying $|L_g L_f^{j-1} \phi_k(t,x)| > \epsilon_j$, matches the $\hat \rho$ in the $\epsilon$-NRD Definition~\ref{defn: numerical relative deg}, and varies with $(t,x)$.

At a given temporal state-space point $(t_0,x_0)$, and after neglecting input-related terms smaller than the threshold, we obtain:
\ifNJDarticle
	\begin{align} \label{time der of phik+zk = eta k approx}
  &\frac{d^{\hat \rho_k}}{dt^{\hat \rho_k}}\left(\phi_k(t,x(t)) +\frac{1}{2}z^2_k(t) \right) \approx \frac{1}{2}\sum_{j=1}^{\hat \rho_k} {\hat \rho_k \choose j} z_k^{(\hat \rho_k-j)}(t) z_k^{(j)}(t) +z_k(t)z_k^{(\hat \rho_k)}(t) 
  +L_gL_f^{\hat \rho_k-1}\phi_k(t,x)u(t)	
 +L_f^{\hat \rho_k} \phi_k(t,x).
	\end{align}
    \else
    \begin{align} \label{time der of phik+zk = eta k approx}
  \frac{d^{\hat \rho_k}}{dt^{\hat \rho_k}}\left(\phi_k(t,x(t)) +\frac{1}{2}z^2_k(t) \right) \approx &\frac{1}{2}\sum_{j=1}^{\hat \rho_k} {\hat \rho_k \choose j} z_k^{(\hat \rho_k-j)}(t) z_k^{(j)}(t) +z_k(t)z_k^{(\hat \rho_k)}(t) \nonumber \\
  &+L_gL_f^{\hat \rho_k-1}\phi_k(t,x)u(t)	
 +L_f^{\hat \rho_k} \phi_k(t,x).
	\end{align}
    \fi
We overload our notation, $\hat{\rho}$, to define the $\eps$-NRD Function, $\hat{\rho}: [t_0,\infty) \times \R^n \to \N^r$, for each point in time and state space. 
\ifNJDarticle
The $\epsilon$-NRD, $\hat{\rho}= [\hat{\rho}_1(t_0,x_0),\dots,\hat{\rho}_r(t_0,x_0)]^\top \in \mathbb{N}^r$, is denoted as the $\eps$-NRD evaluated at $(t_0,x_0)$, which represents the current system state.
\else
Then, we denote $\hat{\rho}= [\hat{\rho}_1(t_0,x_0),\dots,\hat{\rho}_r(t_0,x_0)]^\top \in \mathbb{N}^r$ as the $\eps$-NRD evaluated at $(t_0,x_0)$, which represents the current system state.
\fi

Given a constraint $\phi \in \R^r$, similarly to Eq.~\eqref{eq: aug state space}, we start by extending the state space to capture the first component $\phi_1$,
\begin{align} \label{eq: aug state space NRD}
\hat{x}_{A}:=\begin{bmatrix} x \\ \hat{\tilde{z}} \end{bmatrix} \in \R^{\hat n_A}    
\end{align}
where $\hat{\tilde{z}}:=[z_1,\ldots,z_1^{(\hat{\rho}_1(t_0,x_0)-1)}]^\top$ and $\hat n_A=n+\hat{\rho}_1(t_0,x_0)$. Note, the state space dimension now depends on a fixed temporal space coordinate $(t_0,x_0)$. For SISO systems, we are only able to capture a single state constraint at a time, for other state components $k>1$, we later sequentially capture according to Section~\ref{subsec: sequential aug}. 

In the same manner as Eq.~\eqref{eq: RHS zero}, we set the RHS of Eq.~\eqref{time der of phik+zk = eta k approx} to zero, starting with $k=1$ (and then later for $k>1$ during sequential augmentation). We solve the resulting system of equations to derive a controller as a function of the pseudo input, $ \hat{w}(t)=z_1^{(\hat \rho_k(t_0,x_0))}(t)$:
\begin{align} \label{original control approx}
	\hat u_{(t_0,x_0)}(t,\hat x_A)=-\hat \Omega_{g}^{-1}(t,x)\bigg( \hat \Omega_f(t,x)+ D(z) \hat w(t) \bigg),
\end{align}
\ifNJDarticle
where $\hat \Omega_g(t,x)=L_gL_f^{\hat \rho_1(t_0,x_0)-1} \phi_1(t,x)$, $\hat \Omega_f(t,x)=L_f^{\hat \rho_1(t_0,x_0)} \phi_1(t,x)$ and $ D(z)=z_1$. 
\else
where $\hat \Omega_g(t,x)=L_gL_f^{\hat \rho_1-1} \phi_1(t,x)$, $\hat \Omega_f(t,x)=L_f^{\hat \rho_1} \phi_1(t,x)$ with $\hat \rho_1=\hat \rho_1(t_0,x_0)$ and $ D(z)=z_1$. 
\fi
Note, Eq.~\eqref{original control approx} is well defined within a neighbourhood centred at $(t_0,x_0)$ since $|\Omega_g(t_0,x_0)|= |L_gL_f^{\hat \rho_1(t_0,x_0)-1} \phi_1(t_0,x_0)|>\epsilon_{\hat{\rho}_1-1}$ follows from the definition of the $\eps$-NRD. Hence, $\Omega_g(t_0,x_0)$ is invertible.

Given a point $(t_0,x_0)$ and a threshold $\eps$, we construct an augmented system by extending the state space according to Eq.~\eqref{eq: aug state space NRD} and substituting the input according to Eq.~\eqref{original control approx} to derive the system,
\begin{align} \label{sys approx aug}
    \hat{\Sigma}_A=(\hat{f}_A,\hat{g}_A,\hat{h}_A,\hat{x}_A(t_0)).
\end{align}
Although the system $\hat{\Sigma}_A$ given in Eq.~\eqref{sys approx aug} has the same structure as the augmented system $\Sigma_A$ in Eq.~\eqref{augmented sys}, we use hat notation here to emphasise that $\hat{\Sigma}_A$ is constructed based on the $\eps$-NRD (Definition~\ref{defn: numerical relative deg}, which depends on a local point $(t_0,x_0)$ and threshold $\eps$), rather than the relative degree (Definition~\ref{defn: relative deg}). 

Analogous to Definition~\ref{def: aug map} we next define this augmentation transformation based on the $\eps$-NRD:
\begin{defn}
    \label{def: approx aug map}
     Given a system $\Sigma=(f,g,h,x_0)$ (Eq.~\eqref{tuple}), a constraint $\phi \in C^\infty([t_0, \infty) \times \R^n, \R)$ a local point $(t_0,x_0)$ and threshold $\eps$ we define the following system augmentation map
\begin{align*}
   \stdmathcal{\hat{T}}_{\phi, \eps, (t_0,x_0)
    }(\Sigma)= \hat{\Sigma}_A
\end{align*}
     where $\hat{\Sigma}_A=(\hat{f}_A, \hat{g}_A,\hat{h}_A,\hat{x}_A(t_0))$ is defined in Eq.~\eqref{sys approx aug}.
\end{defn}


After constructing $\hat \Sigma_A$ using state constraint capture, analogous to Section~\ref{subsec: integral}, we perform the operator $\stdmathcal{I}_{\beta}(\cdot)$ from Definition~\ref{def: int map} on system $\hat \Sigma_A$ to obtain an integral augmented system $\hat \Sigma_I=(\hat{f}_I, \hat{g}_I,\hat{h}_I,\hat{x}_I(t_0))$, where state space is given by
\begin{align} \label{eq: int aug state space NRD}
\hat{x}_{I}:=\begin{bmatrix} \hat x_A \\ \xi \end{bmatrix} \in \R^{\hat n_A+r}.  
\end{align}
Recall that if $r=1$, $\xi=\xi_1 \in \R$ evolves according to the FBL controller via ODE~\eqref{ODE: xi integral state} but the state space is updated to $\hat x_I$. Substituting the integral control $\hat w=s_{\beta}(\xi_1)$ (Eq.~\eqref{eq: aug input and integral}) into the feedback controller (Eq.~\eqref{original control approx}), we derive the following feedback controller based on $\eps$-NRD about $(t_0,x_0)$:
\begin{align} \label{approx original control int aug}
	& \hat u_{(t_0,x_0)}(t,\hat x_I)=\hat \Omega_{g}^{-1}(t,x)\bigg( \hat \Omega_f(t,x)+ D(z) s_{\beta}(\xi_1) \bigg).
\end{align}

\ifNJDarticle
If the dimension of the constraint is greater than one, $r > 1$, analogous to Section~\ref{subsec: integral}, we sequentially capture the remaining components $\phi_2,\dots,\phi_r$. The complete augmentation process is given by $\hat \Sigma_I = \stdmathcal{I}_{\beta}(\stdmathcal{\hat T}_{\phi_r}(\dots \stdmathcal{I}_{\beta}(\stdmathcal{\hat T}_{\phi_1}(\Sigma)) \dots ))$. Then, the FBL controller is coupled to $\xi_1$ via an $r$ layers of ODEs analogous to~\eqref{ODE: xi integral state}.
\else
If the dimension of the constraint is greater than one, $r > 1$, analogous to Section~\ref{subsec: integral}, we sequentially capture the rest components $\phi_2,\dots,\phi_r$. The complete augmentation process is given by $\hat \Sigma_I = \stdmathcal{I}_{\beta}(\stdmathcal{\hat T}_{\phi_r}(\dots \stdmathcal{I}_{\beta}(\stdmathcal{\hat T}_{\phi_1}(\Sigma)) ... ))$. Then, the FBL controller is coupled to $\xi_1$ via an $r$ layers of ODEs analogous to~\eqref{ODE: xi integral state}.
\fi



If the $\eps$-NRD changes along the system trajectory, we will need to update the controller used in Eq.~\eqref{approx original control int aug}. We detect changes in the $\eps$-NRD by collecting the previously computed Lie derivatives and monitoring whether individual components pass the $\eps$ defined threshold. To this end, we define:
\begin{align} \label{eq: omega_k}
    \omega_{k,(t_0,x_0)}(t,x):=\left[\abs{L_gL_f^0 \phi_k(t,x)},\dots,\abs{L_gL_f^{\hat \rho_{k}(t_0,x_0)-1} \phi_k(t,x)} \right]
\end{align}
For brevity in Eq.~\eqref{eq: omega_k} we have abused notation and ignored integral states, $\xi_1,\dots,\xi_{k-1}$, dependencies when $k>2$.  This dependency occurs due to sequential augmentation, where we treat the integral augmented system as the base system for the next augmentation, and introduce a new integral state in each augmentation.



Our method for capturing state constraints by numerically sequentially augmenting a system about $(t_0,x_0)$ is summarised in Algorithm~\ref{Alg: aug}.
 

 \begin{algorithm} 
     \hspace*{\algorithmicindent} \textbf{Input:} System $\Sigma$~\eqref{tuple}, Constraints $\phi$~\eqref{eq: constraints}, Threshold $\eps$, Integration Bound $\beta$ \\  \text{ } \qquad  \qquad Current system state $(t_0,x_0)$. \\
     \hspace*{\algorithmicindent} \textbf{Output:} System $\hat{\Sigma}_I$~\eqref{augmented int sys}, \\ \text{ } \text{ } \text{ } \qquad  \qquad
Lie derivatives $\{\omega_{k,(t_0,x_0)}\}_{k=1}^r$~\eqref{eq: omega_k}.
     
     \begin{algorithmic}[1]
     \State $\hat{\Sigma}_A=\stdmathcal{\hat{T}}_{\phi_1, \eps, (t_0,x_0)
        }({\Sigma})$ obtaining $\omega_{1,(t_0,x_0)}$ \Comment{$\stdmathcal{\hat{T}}$ from Definition~\ref{def: approx aug map}}
        \ifNJDarticle
        \State $\hat{\Sigma}_I=\stdmathcal{I}_{\beta}(\hat{\Sigma}_A)$
        \Comment{$\stdmathcal{I}_{\beta}$ from Definition~\ref{def: int map}.}
        \else
        \State $\hat{\Sigma}_I=\stdmathcal{I}_{\beta}(\hat{\Sigma}_A)$
        \Comment{$\stdmathcal{I}_{\beta}$ from Definition~\ref{def: int map}.}
        \fi
     \For{$k \in \{2,\dots,r\}$} 
        \State $\hat{\Sigma}_A=\stdmathcal{\hat{T}}_{\phi_k, \eps, (t_0,x_0)
        }(\hat{\Sigma}_I)$
        obtaining $\omega_{k,(t_0,x_0)}$ 
        \State $\hat{\Sigma}_I=\stdmathcal{I}_{\beta}(\hat{\Sigma}_A)$
     \EndFor
     \end{algorithmic} 
    \caption{Sequential Constraint Capture.}
    \label{Alg: aug}
\end{algorithm}

\subsection{FBL with $\eps$-Numerical Relative Degree} \label{sec: FBL implementation}

Following the same approach used in Section~\ref{subsec: constCapturewithNRD} for constraint capture, we now adapt classical FBL (Section~\ref{subsec: FBL with poles}) by replacing the relative degree (Definition~\ref{defn: relative deg}) with the $\epsilon$-NRD (Definition~\ref{defn: numerical relative deg}). Lie derivatives are stored to detect changes in the local $\epsilon$-NRD and to update the controller accordingly.

Taking time derivatives of the tracking error, as described in Eqs.~\eqref{eq: error time derivative}, and ignoring terms with magnitude less than $\eps_i>0$ multiplied by the input around a given point $(t_0,x_0)$, we get:
\begin{align} \label{eq: approx tracking error}
	E^{(\hat{\sigma})}(t) \approx & L_f^{\hat{\sigma}} h(t,x)+L_gL_f^{\hat{\sigma}-1} h(t,x)u(t)-y_{r}^{(\hat{\sigma})}(t),
\end{align}
where $\hat{\sigma}=\hat{\sigma}(t_0,x_0)$ denotes the $\eps$-NRD of the output map at evaluated $(t_0,x_0)$.

Now, following the same process used to derive the FBL controller given in Eq.~\eqref{solution of the FBL} but using the approximated derivative of the tracking error given in Eq.~\eqref{eq: approx tracking error}, we derive the following FBL controller associated with the point $(t_0,x_0)$,
\begin{align}
\label{solution of the FBL approx}
    \hat{w}_{(t_0,x_0)}(t,x)=-\hat \Gamma_g(t,x)^{-1} \bigg( \hat \Gamma_f(t,x)-\hat \Gamma_{r}(t)+ \hat \nu(t) \bigg),
\end{align}
\ifNJDarticle
where $\hat{\Gamma}_g(t,x)=L_gL_f^{\hat{\sigma}(t_0,x_0)-1} h(t,x)$, $\hat{\Gamma}_f(t,x)=L_f^{\hat{\sigma}(t_0,x_0)} h(t,x)$, $\hat{\Gamma}_{y_r}(t)=y_{r}^{(\hat{\sigma}(t_0,x_0))}(t_0,x_0)$ 
\else
where $\hat{\Gamma}_g(t,x)=L_gL_f^{\hat{\sigma}-1} h(t,x)$, $\hat{\Gamma}_f(t,x)=L_f^{\hat{\sigma}} h(t,x)$, $\hat{\Gamma}_{y_r}(t)=y_{r}^{(\hat{\sigma})}(t_0,x_0)$ with $\hat{\sigma}=\hat{\sigma}(t_0,x_0)$
\fi
and $\hat{\nu}(t)$ is the ``virtual" input. 


To detect changes in the $\eps$-NRD during implementation, we collect the Lie derivatives in the following row vector:
\begin{align} \label{eq: gamma}
    \gamma_{(t_0,x_0)}(t,x):=\left[\abs{L_gL_f^0h(t,x)},\dots,\abs{L_gL_f^{\hat \sigma(t_0,x_0)-1}h(t,x)} \right].
\end{align}
We use $\omega_{k,(t_0,x_0)}(t,x)$ (Eq.~\eqref{eq: omega_k}) and $\gamma_{(t_0,x_0)}$ (Eq.~\eqref{eq: gamma}) to form switching sets to detect the change of $\eps$-NRDs, with details presented in Section~\ref{subsec: implementation} with Eqs.~\eqref{eq: switching condition sets}-\eqref{eq: switching condition}. If the $\eps$-NRD changes, then we must update Eq.~\eqref{eq: approx tracking error} and synthesise a new FBL controller based on the current temporal and state space coordinates. We have summarised how to numerically synthesise such an FBL for a given state $(t_0,x_0)$ and threshold $\eps$ for a general SISO system $\Sigma$ in Algorithm~\ref{Alg: FBL controller synthesis}. 

\begin{algorithm} 
     \hspace*{\algorithmicindent} \textbf{Input:} System $\Sigma$~\eqref{tuple}, System Poles~$\bar \lambda$, Threshold $\eps$,\\  \text{ }  \qquad  \qquad Current system state $(t_0,x_0)$. \\
     \hspace*{\algorithmicindent} \textbf{Output:} Controller $\hat{w}_{(t_0,x_0)}$~\eqref{solution of the FBL approx}, \\  
     \text{ } \text{ } \text{ } \qquad  \qquad Lie derivatives $\gamma_{(t_0,x_0)}$~\eqref{eq: gamma}.
     \begin{algorithmic}[1]
    
    \State For ODE~\eqref{system's ODE}, take $\eps$-NRD, $\hat{\sigma}(t_0,x_0)$ (Definition~\ref{defn: numerical relative deg}), derivatives of the tracking error, $E$, as in Eq.~\eqref{eq: approx tracking error} and obtain $\gamma_{(t_0,x_0)}$ from Eq.~\eqref{eq: gamma}.

    \State Construct the linearised tracking error dynamics~\eqref{ODE: FBL linear system}.

	\State Compute the virtual input $\hat{\nu}$ using pole placement by equating coefficients of the characteristic equation of the linearised error dynamics in Eq.~\eqref{ODE: FBL linear system}.
	\State Compute the FBL controller, $\hat{w}_{(t_0,x_0)}$, using Eq.~\eqref{solution of the FBL approx}.
	\end{algorithmic} 
    \caption{The FBL controller synthesis.}
    \label{Alg: FBL controller synthesis}
\end{algorithm}

\subsection{Controller Implementation} \label{subsec: implementation}
\textbf{Eliminating Augmented States in Controller Design:}
{For practical implementation, it is preferable for controllers to act directly on the original system rather than an augmented system. Although the augmented states introduced during our proposed controller synthesis method are algebraically dependent on the original states and hence require no additional measurements to evaluate, we eliminate these auxiliary variables to simplify implementation. Our controller is thus reformulated purely in terms of the original system and integral states.}




Given a system $\Sigma$~\eqref{tuple} with state constraints $\phi$~\eqref{eq: constraints}, a feasible initial point $(t_0,x_0)$, and a threshold $\varepsilon>0$, our controller synthesis proceeds in three stages.
First, we apply Algorithm~\ref{Alg: aug} to sequentially encode the constraints, yielding an integral-augmented system $\hat{\Sigma}_I$ and an associated feedback controller $\hat{u}_{(t_0,x_0)}$\eqref{approx original control int aug}, which depends on a virtual input to be determined next via FBL.
Second, we invoke Algorithm~\ref{Alg: FBL controller synthesis} to perform FBL on $\hat{\Sigma}_I$, obtaining a controller $\hat{w}_{(t_0,x_0)}$~\eqref{solution of the FBL approx}.
Finally, using Eqs.~\eqref{eq: augmented states replacement}, we sequentially eliminate all augmented states, $\tilde{z}$, in $\hat{w}_{(t_0,x_0)}(t,\hat x_I)$ (Eq.~\eqref{solution of the FBL approx}) and $\hat u_{(t_0,x_0)}(t,\hat x_I)$ (Eq.~\eqref{approx original control int aug}) producing the controllers {$\pi_{(t_0,x_0)}:[0,\infty) \times \R^{n+r} \to \R^m$} and $\hat u_{(t_0,x_0)}(t,[x,\xi])$, which depend only on the original and integral states.

Unlike the algebraic relations in Eq.~\eqref{eq: augmented states replacement}, which allow direct substitution of the augmented states $\tilde{z}$, the integral state $\xi$ evolves according to its own nonlinear ODE~\eqref{ODE: xi integral state} and therefore remains part of the closed-loop system. This ODE depends only on $\xi$ and the original system state $x$, but in general it cannot be solved analytically, preventing the elimination of $\xi$ from the controller. Consequently, the controller possesses internal dynamics in $\xi$, a standard feature of integral control, where the state accumulates information over time. The dynamics of $\xi$, with its initial value $\xi_0$ determined explicitly from $x_0$ via Eq.~\eqref{eq: xi zero}, are thus coupled with the original ODE~\eqref{system's ODE} in which the synthesised controllers $\pi_{(t_0,x_0)}$ and $\hat u_{(t_0,x_0)}$ are implemented. We will next see the implementation details of the controllers in Example~\ref{ex: Jacob Ex2}.


\textbf{Controller Switching Based on $\epsilon$-NRD Changes:}
During closed-loop simulation, as the state evolves, the $\epsilon$-NRDs may change, necessitating an update of the controller. To detect these changes, we use the outputted Lie derivatives $\{\omega_{k,(t_0,x_0)}\}_{k=1}^r$ (Eq.~\eqref{eq: omega_k}) from Algorithm~\ref{Alg: aug} and $\gamma_{(t_0,x_0)}$ (Eq.~\eqref{eq: gamma}) from Algorithm~\ref{Alg: FBL controller synthesis}, to define the sets
\ifNJDarticle
\begin{align} \label{eq: switching condition sets}
C_1 &:= \bigcap_{k=1}^r \Bigl\{ (t,x, \xi) : [\omega_{k,(t_0,x_0)}(t,[x,\xi])]_i < \epsilon_i  \text{ for } 1\le i\le \hat{\rho}_k(t_0,x_0)-1,\; \text{and}\; [\omega_{k,(t_0,x_0)}(t,[x,\xi])]_{\hat{\rho}_k(t_0,x_0)} \ge \epsilon_{\hat{\rho}_k(t_0,x_0)} \Bigr\},    \\ 
C_2 &:= \Bigl\{ (t,x,\xi) : [\gamma_{(t_0,x_0)}(t,[x,\xi])]_i < \epsilon_i \text{ for } 1\le i\le \hat{\sigma}(t_0,x_0)-1,\; \text{and}\; [\gamma_{(t_0,x_0))}(t,[x,\xi])]_{\hat{\sigma}(t_0,x_0)} \ge \epsilon_{\hat{\sigma}(t_0,x_0)} \Bigr\}. \nonumber
\end{align}
\else
\begin{align} \label{eq: switching condition sets}
C_1 &:= \bigcap_{k=1}^r \Bigl\{ (t,x, \xi) : [\omega_{k,(t_0,x_0)}(t,[x,\xi])]_i < \epsilon_i  \text{ for } 1\le i\le \hat{\rho}_k(t_0,x_0)-1,\; \nonumber \\
&\qquad \qquad \qquad \qquad \text{and}\; [\omega_{k,(t_0,x_0)}(t,[x,\xi])]_{\hat{\rho}_k(t_0,x_0)} \ge \epsilon_{\hat{\rho}_k(t_0,x_0)} \Bigr\},    \\ 
C_2 &:= \Bigl\{ (t,x,\xi) : [\gamma_{(t_0,x_0)}(t,[x,\xi])]_i < \epsilon_i \text{ for } 1\le i\le \hat{\sigma}(t_0,x_0)-1,\; \nonumber \\ 
&\qquad \qquad \qquad \qquad \text{and}\; [\gamma_{(t_0,x_0))}(t,[x,\xi])]_{\hat{\sigma}(t_0,x_0)} \ge \epsilon_{\hat{\sigma}(t_0,x_0)} \Bigr\}. \nonumber
\end{align}
\fi
The controller remains valid while $\epsilon$-NRDs remains constant:
\[
(t,[x(t),\xi(t)])\in C_1\cap C_2,
\]
and a switch is triggered when a change in the $\epsilon$-NRD is detected, that is:
\begin{equation} \label{eq: switching condition}
    (t,[x(t),\xi(t)])\notin C_1\cap C_2.
\end{equation}

When a switch is activated, Algorithm~\ref{Alg: aug} and Algorithm~\ref{Alg: FBL controller synthesis} are executed to synthesise a new controller based on the current temporal state coordinate $(t,x(t))$. Since controller synthesis entails some computation, in practice, it is best to leverage memory storage to retain previously computed controllers indexed by their $\eps$-NRD.
When the system enters a region of the state space where the $\epsilon$-NRD matches a previously encountered value, we retrieve the corresponding precomputed controller from memory rather than synthesising a new one.

\subsection{Selection of Controller Synthesis Parameters}~\label{subsec: selection of parameters}
Parameters threshold $\eps$, the integration bound $\beta$, and the system poles $\bar \lambda$, are required for Algorithms~\ref{Alg: aug} and~\ref{Alg: FBL controller synthesis}. In practice, a grid search may be used to select the controller parameters that yield the desired performance. We next provide some guidelines on why you should not select these parameters to be too large or too small.
\begin{enumerate}
    \item The switching region (the compliment of $C_1 \cap C_2$ from Eq.~\eqref{eq: switching condition sets}) is a function of the threshold $\eps$. Increasing $\eps$ shrinks $C_1 \cap C_2$, which increases the size of the switching region. By the same logic, decreasing $\eps$ reduces the size of the switching region. Selecting a small $\eps$ threshold gives a small switching region, requiring a small numerical ODE step size to avoid inadvertently jumping too far in the direction of the vector field, causing constraint violation. Small numerical step size implies that the implementation of the controller on a physical system requires a high sensor refresh rate. On the other hand, selecting a $\eps$-threshold that is too large results in significant approximation errors when evaluating $\eps$-NRDs (Definition~\ref{defn: numerical relative deg}), and may eventually degrade the performance of the controller (see \cite{tomlin1998switching} that provides a bound on asymptotic tracking error depending on the magnitude of $\eps$ threshold).
    \item Lemma~\ref{lem: beta makes Gamma_{gI} singular} demonstrates that the integration bound $\beta$ depend on $\Gamma_{g_I}$ linearly, where $\Gamma_{g_I}$ is defined in Corollary~\ref{cor: int system non invert}. Thus, $\beta$ is selected not to be too small to improve the invertibility of $\Gamma_{g_I}$. At the same time, $\beta$ should not be chosen too large, considering the controller effort, which increases with $\beta$ as seen in Eq.~\eqref{original control int aug}.
    \item If the real parts of the poles of a linear system are selected as negative, then it is well known that the system is exponentially stable. In the context of ODE~\eqref{ODE: FBL linear system}, this means that the tracking error, as well as the tracking error time derivatives, exponentially tend to zero. Poles can be adjusted so that criteria such as settling time, percentage overshoot, etc, are within a specified range. In general, poles placed further to the left on the real axis result in faster convergence but greater overshoot. We present a numerical experiment with different pole placement schemes in Example~\ref{ex: Jacob Ex2}.
\end{enumerate}

\section{Numerical Experiments}
We next provide two numerical examples to demonstrate our proposed FBL algorithm with state constraints. In the first example, we will manually derive the analytical controller to illustrate the algorithm. In the second numerical example, the algorithmic steps are carried out using Matlab's symbolic toolbox. In both examples, after the controller is synthesised, the closed-loop system is simulated using Matlab's ODE solver.

\begin{ex}[Illustrative Example] \label{ex: Jacob Ex2}
Let us consider the following problem adapted from~\cite{jacobson1969transformation},

\begin{align}
    \label{eq: JE2 ode}
   \text{Find } & u \text{ such that } \lim_{t \to \infty } ||y(t) - y_r(t)||_2=0 \text{ where,}\\ \nonumber
	\begin{bmatrix}
		\dot x_1(t) \\
		\dot x_2(t)
	\end{bmatrix}&=\begin{bmatrix}
		x_2(t) \\
		-x_2(t)
	\end{bmatrix}+\begin{bmatrix}
		0 \\
		1 
	\end{bmatrix}u(t), \text{ } x(0)=\begin{bmatrix}
		0 \\
		-2
	\end{bmatrix} \\
	y(t)&=x_1(t) \text{ with } y_r(t)=0 \nonumber \\ \nonumber
	\phi(t,x(t))&:=x_1(t)-8(t-0.5)^2+0.5 \le 0.
\end{align}
We first execute Algorithm~\ref{Alg: aug} to augment the state space capturing $\phi(t,x) \le 0$. This necessitates the introduction of a slack variable, $z(t)$, and for us to compute the $\eps$-NRD of the constraint function as in Eq.~\eqref{time der of phik+zk = eta k approx}:
\begin{align} 
   \frac{d^0}{dt^0}\left( \phi(t,x(t))+\frac{1}{2} z(t)^2\right)=&x_1(t)-8(t-0.5)^2+0.5 + \frac{1}{2}z(t)^2  \nonumber \\
    \frac{d}{dt}\left( \phi(t,x(t))+\frac{1}{2}z(t)^2 \right)=&x_2(t)-16(t-0.5)+z(t) z^{(1)}(t) \nonumber \\   
    \label{eq: 2nd deri of JE2's phi}
	\frac{d^2}{dt^2}\left( \phi(t,x(t))+\frac{1}{2}z(t)^2 \right)=&u(t)-x_2(t)-16+ z^{(1)}(t)^2+z(t)z^{(2)}(t). 
\end{align}
It is clear from Eq.~\eqref{eq: 2nd deri of JE2's phi} that the relative degree (Definition~\ref{defn: relative deg}) and the $\eps$-NRD (Definition~\ref{defn: numerical relative deg}) of the constraint function both correspond to $\rho=2$ when $0<\eps_1<1$. Define our augmented states as $\tilde{z}(t):=[z(t),z^{(1)}(t)]^\top$ and pseudo input as $ w(t)=z^{(2)}(t)$, we can set the RHS of Eq.~\eqref{eq: 2nd deri of JE2's phi} to zero to derive the algebraic relationship between the original and augmented states that is also given for the general case in Eqs.~\eqref{eq: augmented states replacement} as well as the relationship between the original system input and augmented input given in Eq.~\eqref{original control}:
\begin{align}
    \label{eq: JE z1, z2}
    \tilde{z}_1&= \sqrt{16(t-0.5)^2-2x_1-1}, \nonumber \\
    \tilde{z}_2&=\frac{16t-x_2-8}{\tilde{z}_1}=\frac{16t-x_2-8}{\sqrt{16(t-0.5)^2-2x_1-1}} ,\\ \label{exeq: original controller} 
    u& = x_2+16 -\tilde{z}_2^2 - \tilde{z}_1 {w}.
\end{align}

We next augment the state space to include $\tilde{z}$, find the initial conditions of the slack variables using Eqs.~\eqref{eq: JE z1, z2} and introduce integral controller structure ($w=s_\beta(\xi)$ with $\dot{\xi}(t)=\tilde{w}(t)$) to derive a system of the same form as $\Sigma_I$ in Eq.~\eqref{augmented int sys} with state space $x_I:=[x_1,x_2,\tilde{z}_1,\tilde{z}_2,\xi]^\top$:
\begin{align}
    \label{eq: JE2 aug ode}
    \dot{x}_I(t)
    =&\begin{bmatrix}
		x_2(t) \\
		16-\tilde{z}_2(t)^2-\tilde{z}_1(t) s_\beta(\xi(t)) \\
		\tilde{z}_2(t) \\
		s_\beta(\xi(t)) \\
        0
	\end{bmatrix}+\begin{bmatrix}
		0 \\
		0 \\
		0 \\
        0 \\
		1 
	 \end{bmatrix}\tilde{w}(t), \\
      x_I(0)= & \begin{bmatrix}
		0, &
		-1, &
		\sqrt{3}, &
		-\frac{7}{3}\sqrt{3}, &
		\log\left(\frac{3\beta + 2\sqrt{3}}{3\beta - 2\sqrt{3}}\right)
	\end{bmatrix}^\top \nonumber \\
	y(t)=&x_1(t) \text{ with } y_r(t)=0.
\end{align} 
We now execute Algorithm~\ref{Alg: FBL controller synthesis} to compute an FBL controller for the integral augmented system given in Eq.~\eqref{eq: JE2 aug ode}. To do this, we first take time derivatives of the system output until the input first appears as in Eq.~\eqref{eq: approx tracking error}:
\begin{align} \label{exeq: deriv of output}
     \frac{d^3}{dt^3}y(t)&=-3\tilde{z}_2(t) s_\beta(\xi(t))-\tilde{z}_1(t) s_\beta'(\xi(t))\tilde{w}(t),
\end{align}
where $s_\beta'(\xi(t)):=\frac{\partial}{\partial \xi} s_{\beta,i}(\xi(t)) = \frac{2 \beta e^{-\xi
(t)}}{(e^{-\xi(t)}+1)^2}$.

Hence, the $\eps$-NRD of the system output is 3 in the region $\{(t,x_I) \in [t_0,\infty) \times \R^5: \abs{\tilde{z}_1 s_\beta '(\xi)}>\eps_2  \}$. The stacked Lie derivatives from Eq.~\eqref{eq: gamma} can also be stored as follows, $\gamma=[0,0,\abs{\tilde{z}_1s_\beta '(\xi)}]^\top$.  While $\abs{\tilde{z}_1 s_\beta '(\xi)}>\eps_2$ the FBL controller is given by
\begin{align} \label{exeq: FBL controller}
    \tilde{w}(t,x_I)=-\frac{1}{\tilde{z}_1 s_\beta '(\xi)}(3\tilde{z}_1 s_\beta '(\xi)+ \nu(t,x_I)),
\end{align}
where $\nu$ is the virtual input designed to stabilise the linearised system. Selecting poles $\lambda \in (-\infty,0)^3$, the gains of the linearised system are determined by ensuring the matrix given in Eq.~\eqref{eq: subsystem} has eigenvalues equal to the selected poles. For this example this is equivalent to finding $\lambda^3+K_3 \lambda^2 + K_2 \lambda + K_1 = (\lambda -\lambda_1)(\lambda - \lambda_2)(\lambda -\lambda_3)$ leading to gains of 
\begin{align} \label{exeq: gains}
    K_1&=-\lambda_1\lambda_2\lambda_3\\ \nonumber
    K_2&=\lambda_1 \lambda_2 + \lambda_1 \lambda_3 + \lambda_2 \lambda_3\\ \nonumber
    K_3&= -\lambda_1 - \lambda_2 - \lambda_3.
\end{align} 
\ifNJDarticle
The virtual input is then given by $\nu(t,x_I)=-K[y,\dot{y},\ddot{y}]^\top=-[K_1,K_2,K_3][x_1, x_2, 16-\tilde{z}_2^2-\tilde{z}_1 s_\beta(\xi)]^\top$. 
\else
As the reference output is defended as $y_r(t)=0$, the virtual input is then given by $\nu(t,x_I)=-K[y,\dot{y},\ddot{y}]^\top=-[K_1,K_2,K_3] [x_1, x_2, 16-\tilde{z}_2^2-\tilde{z}_1 s_\beta(\xi)]^\top$.
\fi

We now synthesise the controller $\pi$ by substituting $\nu$ into the FBL controller $\tilde w$~\eqref{exeq: FBL controller} and eliminating the augmented slack variables using Eqs.~\eqref{eq: JE z1, z2}, we derive a closed-loop controller $\pi$ that only depends on $t$, $x$ and $\xi$:
\ifNJDarticle
\begin{align}
\label{exeq: pi for example 1}
 \pi(t,[x,\xi])=&\frac{-1}{s_\beta '(\xi(t)) \sqrt{16(t-0.5)^2-2x_1(t)-1} } \times \bigg(3\sqrt{16(t-0.5)^2-2x_1(t)-1} s_\beta '(\xi(t))\\ \nonumber
 & - K \bigg[x_1(t), x_2(t), 16-\frac{(16t-x_2(t)-8)^2}{16(t-0.5)^2-2x_1(t)-1}-s_\beta(\xi(t)) \sqrt{16(t-0.5)^2-2x_1(t)-1} \bigg]^\top \bigg),
\end{align}
\else
\begin{align}
\label{exeq: pi for example 1}
 \pi(t,[x,\xi])=&\frac{-1}{s_\beta '(\xi(t)) \sqrt{16(t-0.5)^2-2x_1(t)-1} } \times \bigg(3 s_\beta '(\xi(t)) \nonumber \\
 &\times \sqrt{16(t-0.5)^2-2x_1(t)-1}- K \bigg[x_1(t), x_2(t), 16 \nonumber \\ 
 &-\frac{(16t-x_2(t)-8)^2}{16(t-0.5)^2-2x_1(t)-1}-s_\beta(\xi(t)) \nonumber \\
 & \times \sqrt{16(t-0.5)^2-2x_1(t)-1} \bigg]^\top \bigg), 
\end{align}
\fi
where $K \in \R^3$ is given in Eq.~\eqref{exeq: gains}.

We also substitute the Eqs.~\eqref{eq: JE z1, z2} into the original system input~\eqref{exeq: original controller} to remove the augmented slack variables, 
\ifNJDarticle
\begin{align} \label{exeq: controller}
 & u(t,[x,\xi])=x_2+16-\frac{(16t-x_2-8)^2}{16(t-0.5)^2-2x_1-1} -s_\beta(\xi) \sqrt{16(t-0.5)^2-2x_1-1},
\end{align}
\else
\begin{align} \label{exeq: controller}
 u(t,[x,\xi])=&x_2+16-\frac{(16t-x_2-8)^2}{16(t-0.5)^2-2x_1-1} \\
 &-s_\beta(\xi) \sqrt{16(t-0.5)^2-2x_1-1}, \nonumber
\end{align}
\fi
recalling the integral control $w=s_\beta(\xi)$.

Coupling the integral state $\xi$ to the original ODE and substituting the closed-loop controller 
\ifNJDarticle
$\pi(t,[x,\xi])$~\eqref{exeq: pi for example 1} and the original system input $u(t,[t,\xi])$~\eqref{exeq: controller},
\else
$\pi(t,[x,\xi])$ (Eq.~\eqref{exeq: pi for example 1}) and the original system input $u(t,[t,\xi])$ (Eq.~\eqref{exeq: controller}),
\fi
we eventually derive the following closed-loop implementation ODE, which is sent to Matlab's ODE solver later for solving.
\begin{align*}
    \begin{bmatrix}
        \dot x_1(t) \\
        \dot x_2(t) \\
        \dot \xi(t)
    \end{bmatrix}=\begin{bmatrix}
        x_2(t) \\
        -x_2(t) \\
        \pi(t,[x(t),\xi(t)])
    \end{bmatrix}+\begin{bmatrix}
        0 \\
        1 \\
        0
    \end{bmatrix} u(t,[x,\xi]).
\end{align*}
Note that the controller $\pi$ given in Eq.~\eqref{exeq: controller} is only valid when the switching condition given in Eq.~\eqref{eq: switching condition} is not triggered. For this system the calculated Lie derivatives were $\gamma=[0,0,\abs{\tilde{z}_1s_\beta'(\xi)}]$. Therefore, using the algebraic relationship between the augmented states and the original states given in Eqs.~\eqref{eq: JE z1, z2}, the switching is activated for this particular problem whenever the trajectory satisfies
\begin{align} \label{exeq: switching condition}
\left|\tilde{z}_1s_\beta'(\xi)\right|=\left| s_\beta '(\xi)\sqrt{16(t-0.5)^2-2x_1-1} \right| \le \eps_2.
\end{align}

\ifNJDarticle
When the trajectory enters the switching region (states satisfying Eq.~\eqref{exeq: switching condition}),
\else
When the trajectory enters the switching region, that is Eq.~\eqref{exeq: switching condition} is satisfied,
\fi
the $\varepsilon$-NRD increases beyond three. This necessitates synthesising a new FBL controller based on the updated $\varepsilon$-NRD at the current state. In this region, input coefficients less than $\varepsilon_i$ are assumed to be negligible. Thus the output derivative simplifies to $\frac{d^3}{dt^3}y(t) \approx -3\tilde{z}_2(t) s_\beta(\xi(t))$, which now involves no input terms. Thus, higher-order derivatives of the output are required for the input to appear and for an FBL controller to be derived. For brevity, we omit the full derivation, as it mirrors that of Eq.~\eqref{exeq: controller}.

To illustrate the influence of pole placement on controller performance, we performed numerical experiments using Algorithm~\ref{Alg: aug} and Algorithm~\ref{Alg: FBL controller synthesis} to synthesise closed-loop controllers for different pole locations. The resulting simulations of system trajectories are shown in Fig.~\ref{Fig: JE_int_tx1}. Each simulation applied the derived controller to the system while monitoring the switching sets from Eq.~\eqref{eq: switching condition sets}, which simplifies to Eq.~\eqref{exeq: switching condition} in this example. Whenever a change in the $\eps$-NRD was detected, the controller was re-synthesised about the current state trajectory using the same procedure. The parameters were fixed as $\beta = 100$ and $\varepsilon = [0.01, \dots, 0.01]^\top$, with poles assigned either to $-2.9$ or $-8$.

Since all poles are negative, both controllers are expected to achieve asymptotic tracking. However, placing poles further left on the real axis typically results in faster convergence at the expense of potentially worse transient behaviour. Fig.~\ref{Fig: JE_int_tx1} confirms this: both trajectories respect the constraint and achieve successful asymptotic tracking. The controller with poles at $-2.9$ (blue curve) applied the controller given in Eq.~\eqref{exeq: controller} exactly throughout, without requiring any switching. In contrast, the controller with poles at $-12$ (red curve) achieved faster convergence but approached the constraint boundary, triggering a change in the $\eps$-NRD and a subsequent re-synthesis of the controller around the affected state.

In a second numerical experiment of this example, we visualise the regions where the $\eps$-NRD changes, the switching regions of our FBL controllers. For this example, the switching region is derived analytically in Eq.~\eqref{exeq: switching condition}. To make these regions more visible, we select parameters $\beta = 1$ and $\eps = [0.2, \dots, 0.2]^\top$. The resulting plot in Fig.~\ref{Fig: 3D_JE_int_tx1x5} shows that the $\eps$-NRD remains mostly constant at three, except near the boundary of the state constraint set. This behaviour is consistent with Corollary~\ref{cor: int system non invert} (found in the appendix), which shows that the relative degree becomes ill-defined on the constraint boundary.
\ifNJDarticle
\begin{figure*}
	\vspace{-5pt}
	\subfloat[ \text{Output trajectories for Example~\ref{ex: Jacob Ex2}.} \label{Fig: JE_int_tx1}]{\includegraphics[width=0.33 \linewidth, trim = {0cm 0cm 0cm 0cm}, clip]{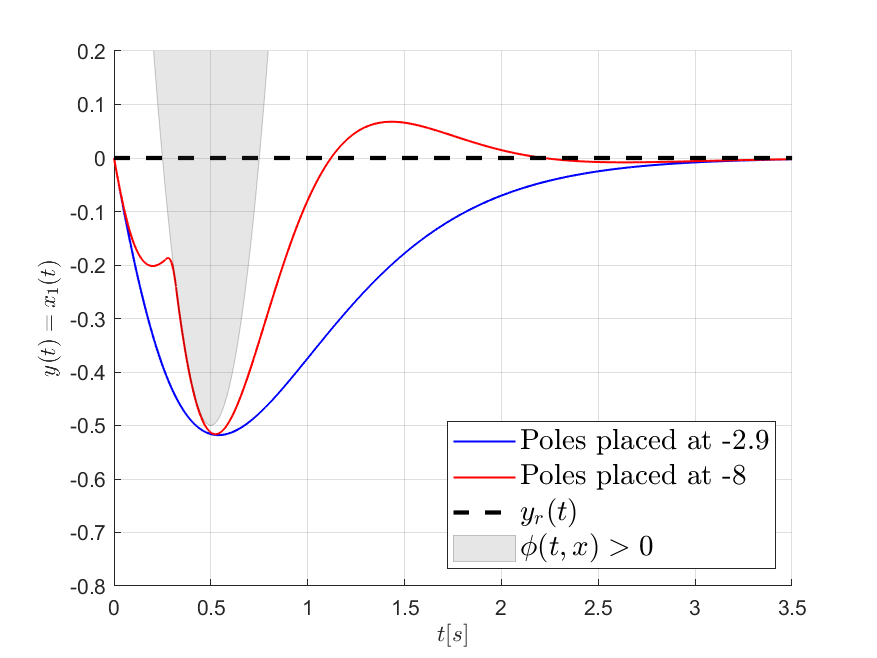}} \hfill
	\subfloat[ \text{3D visualisation of switching regions for Example~\ref{ex: Jacob Ex2}.} \label{Fig: 3D_JE_int_tx1x5}]{\includegraphics[width=0.34 \linewidth, trim = {0cm 0cm 0cm 0cm}, clip]{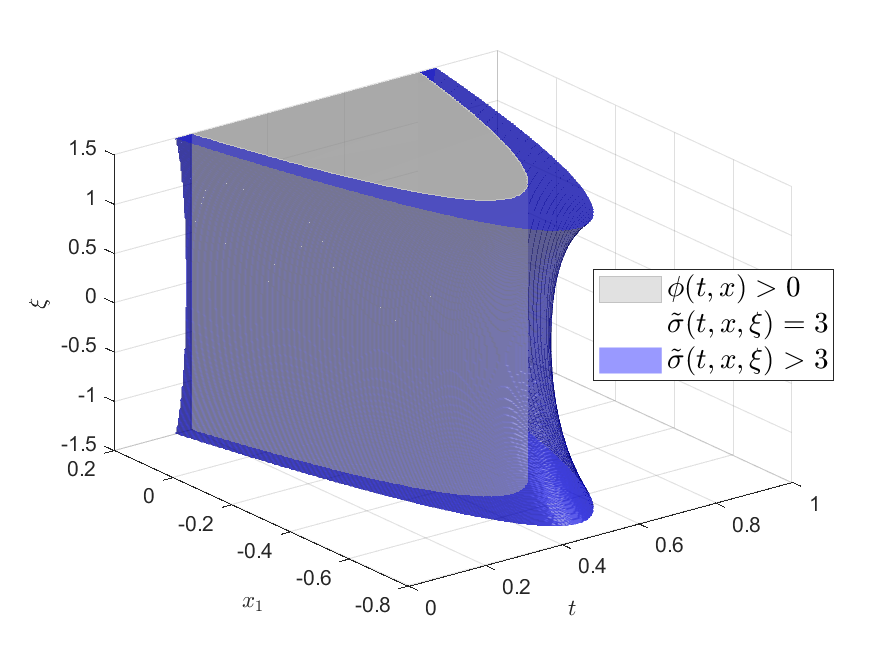}} \hfill
	\subfloat[ \text{Output trajectories for Example~\ref{ex: LZ system}.} \label{Fig: LZ application}]{\includegraphics[width=0.33 \linewidth, trim = {0cm 0cm 0cm 0cm}, clip]{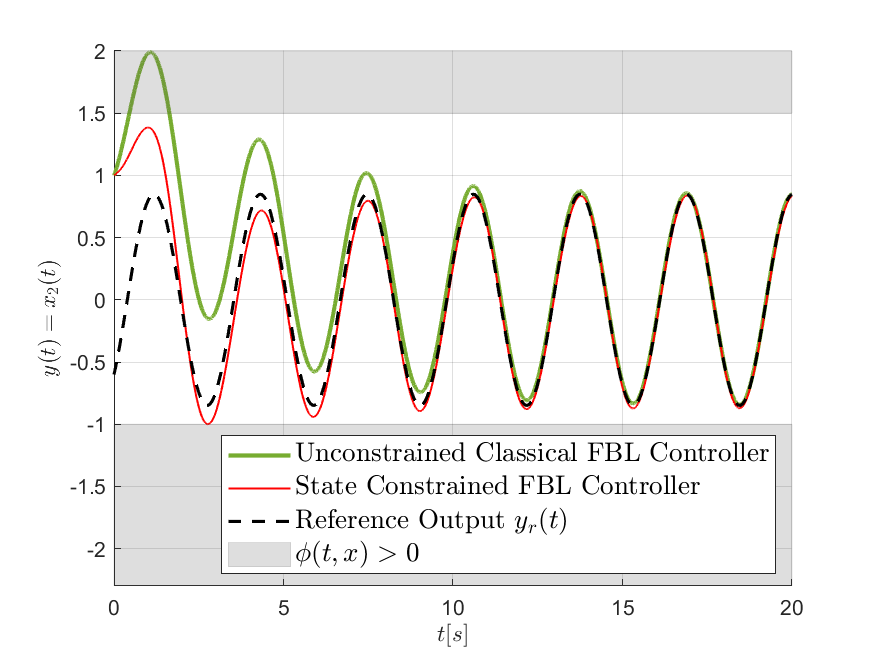}} \hfill
	\vspace{-0pt}
	\caption{Figures associated with Examples~\ref{ex: Jacob Ex2}-\ref{ex: LZ system} showing the 3D visualisation of switching regions and produced output trajectories.} \label{Fig: example 1,2}
	\vspace{-15pt}
\end{figure*} 

\else
\begin{figure*}
	\vspace{-5pt}
	\subfloat[ \text{Output trajectories of Example~\ref{ex: Jacob Ex2}.} \label{Fig: JE_int_tx1}]{\includegraphics[width=0.5 \linewidth, trim = {0cm 0cm 0cm 0cm}, clip]{figs/ty_JE.eps}} \hfill
	\subfloat[ \text{Switching regions of Example~\ref{ex: Jacob Ex2}.} \label{Fig: 3D_JE_int_tx1x5}]{\includegraphics[width=0.5 \linewidth, trim = {0cm 0cm 0cm 0cm}, clip]{figs/3D_vis_JE.eps}} \hfill
	\subfloat[ \text{Output trajectories of Example~\ref{ex: LZ system}.} \label{Fig: LZ application}]{\includegraphics[width=0.5 \linewidth, trim = {0cm 0cm 0cm 0cm}, clip]{figs/ty_LZ.eps}} \hfill
	\vspace{-0pt}
	\caption{Figures associated with Examples~\ref{ex: Jacob Ex2}-\ref{ex: LZ system} showing the produced output trajectories and 3D visualisation of switching regions.} \label{Fig: example 1,2}
	\vspace{-15pt}
\end{figure*} 

\fi

\end{ex}

\begin{ex}~\label{ex: LZ system}
Let us consider the following constrained nonlinear tracking problem with a time-varying reference, 
\begin{align}
\label{LZ ode}
 \text{Find } & u \text{ such that } \lim_{t \to \infty } ||y(t) - y_r(t)||_2=0 \text{ where,} \\ 
	\begin{bmatrix}
		\dot x_1(t) \\
		\dot x_2(t) \\
		\dot x_3(t)
	\end{bmatrix}&=\begin{bmatrix}
		10(x_1(t)-x_2(t)) \\
		28x_1(t)-x_2(t)-x_1(t)x_3(t) \\
		x_1(t)x_2(t)-\frac{8}{3}x_3(t)
	\end{bmatrix}+\begin{bmatrix}
		0 \\
		1 \\
		0
	\end{bmatrix}u(t), \text{ } x(0)=\begin{bmatrix}
	    0.1 \\
        1 \\
        16
	\end{bmatrix} \nonumber \\
	y(t)&=x_2(t) \text{ with } y_r(t)=0.6(\sin(2t)-\cos(2t)) \nonumber \\
    \phi(t,x)&=[
\phi_1(t,x), \phi_2(t,x)]^\top=[
-x_2-1,x_2-1.5]^\top \le 0. \nonumber
\end{align} 
The tracking problem given in Eqs.~\eqref{LZ ode} is a SISO system, however, it has multiple constraints requiring sequential augmentation (see Section~\ref{subsec: sequential aug}).

Numerical experiments were conducted using an integral parameter of $\beta=100$, $\eps$-NRD threshold parameter of $\eps=[0.01,\dots,0.01]$ and positioning the poles at $-0.3$. By initialising the feedback controller to have zero initial value according to Eq.~\eqref{eq: xi zero}, the initial condition of the integral states was found to be $\xi(0)=[\xi_1(0),\xi_2(0)]^\top=[0.002,-0.001]^\top$. 
We execute Algorithm~\ref{Alg: aug} and Algorithm~\ref{Alg: FBL controller synthesis} to synthesise controllers and simulate the closed-loop system. During simulation, we monitor the switching condition in Eq.~\eqref{eq: switching condition}. When triggered, we re-execute two algorithms to synthesise a new controller, switch to the updated system, and continue the simulation, again tracking the switching condition.

For comparison, we also applied the classical FBL controller from Section~\ref{subsec: FBL with poles}, using the same poles and parameters but ignoring the state constraint (without augmenting the system at all). Fig.~\ref{Fig: LZ application} shows the trajectories generated by the two controllers. The constrained switching controller (red) respects the feasible region $\phi(t,x) \leq 0$, remaining entirely outside the shaded area, while still achieving asymptotic tracking of the reference signal $y_r(t)$. In contrast, the classical FBL controller (green) exhibits a significant overshoot early on, violating the constraint.



\end{ex}

\section{Conclusion}
In this paper, we introduced a state augmentation procedure and an integral controller structure modification to incorporate state constraints within the FBL framework. We demonstrated that the augmented system exhibits an ill-defined relative degree at the boundary of the original system’s state constraints. To address this challenge and enable the application of FBL, we proposed a switching condition based on a locally valid relative degree, facilitating controller synthesis as we approach the region along the boundary of the state constraint. The proposed method enhances the practical applicability of FBL for systems with state constraints, as illustrated by our numerical experiments. Future work will explore ways to extend our switching framework to systems with multiple inputs and multiple outputs. 

\ifNJDarticle
\subsection*{Conflicts of Interest}
The authors declare no conflicts of interest.

\subsection*{Data Availability Statement}
The data that support the findings of this study are available from the corresponding author upon reasonable request.
\fi

\ifNJDarticle
\bibliography{root}
\else
\bibliographystyle{elsarticle-num}
\bibliography{root}
\fi



\appendix
This appendix provides the technical analysis supporting the discussion in Section~\ref{subec: FBL on aug_sys}.

\textbf{Ill-Defined Relative Degree in Augmented Systems Along Constraint Boundaries:}
For the augmented dynamics, $\Sigma_A$, defined by Eq.~\eqref{augmented sys}, with state space $x_{A}:=\begin{bmatrix} x \\ \tilde{z} \end{bmatrix} \in \R^{n_{A}}$ from Eq.~\eqref{eq: aug state space}, we show in the following proposition that the matrix $\Gamma_{g_A}(t,x_A)$ is not invertible whenever $z_i=0$ for some $i \in \{1, \dots r\}$, recalling $z_i$ is a constraint slack variable defined in Eq.~\eqref{equality constraint x}. This result can be interpreted as $\Gamma_{g_A}(t,{x_A})$ is not invertible whenever the original (non-augmented) system's trajectory enters the boundary of the state constraint, that is $Z(t):=-\phi(t,x(t))=0$. This, in turn, shows that the definition of Relative Degree (Definition~\ref{defn: relative deg}) is ill-defined for the augmented dynamics $\Sigma_A$, that is $L_{ g_A}L_{ f_A}^{\sigma_{k}-1}  h_{k}(t,x_A)$ may be non-zero in some regions of the state space but zero in others.

\begin{prop}
[FBL is Incompatible with Augmentation]\label{prop: boundary and invertibility}
    Consider the augmented system $\Sigma_A$ defined in Eq.~\eqref{augmented sys} with state space $x_{A}:=\begin{bmatrix} x \\ \tilde{z} \end{bmatrix} \in \R^{n_{A}}$ from Eq.~\eqref{eq: aug state space}. Suppose there exists $\sigma \in \N^m$ such that 
    \begin{align} \nonumber
    &  L_{ g_A}L_{ f_A}^{i-1}  h_{k}(t_0, x_A(t_0)) \equiv 0 \text{ for all } 1 \le k \le m \text{ and } 1 \le i \le \sigma_k -1 .\\   \label{eq: local RD}
        &L_{ g_A}L_{ f_A}^{\sigma_{k}-1}  h_{k}(t_0, x_A(t_0)) \ne 0 \text{ for all } 1 \le k \le m.
    \end{align}
    Then the matrix $\Gamma_{ {g_A}}(t, x_A) \in \R^{m \times m}$ defined as
    \begin{equation*}
        \Gamma_{ {g_A}}(t, x_A):=[L_{ g_A}L_{ f_A}^{\sigma_{1}-1}  h_{1}(t, x_A)^\top,\ldots,L_{g_A}L_{ f_A}^{\sigma_m-1} h_m(t, x_A)^\top]^\top
    \end{equation*}
    does not have full column rank when $z_i=0$ for any $i \in \{1,\dots r\}$. 
\end{prop}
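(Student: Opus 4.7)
The plan is to exhibit an explicit zero column of $\Gamma_{g_A}$ at every point where $z_i = 0$, which immediately yields rank deficiency since $\Gamma_{g_A}$ is square by the assumption $r = m$ from Eq.~\eqref{r equal m}. First, I would inspect the $i$-th column of $g_A$ from Eq.~\eqref{eq: augmented f and g}: its first $n$ entries are the $i$-th column of $-g(t,x)\Omega_g^{-1}(t,x)D(z)$, which carries an explicit factor of $z_i$, while its only other nonzero entry is a single unit at the row corresponding to the integrator update $\dot z_i^{(\rho_i-1)} = w_i$. Hence, at any point with $z_i = 0$, the $i$-th column of $g_A$ collapses to a standard basis vector targeting the $z_i^{(\rho_i-1)}$ coordinate of $x_A$, so that
\begin{equation*}
[\Gamma_{g_A}(t,x_A)]_{k,i}\bigg|_{z_i=0} \;=\; \frac{\partial L_{f_A}^{\sigma_k-1} h_k}{\partial z_i^{(\rho_i-1)}}\bigg|_{z_i=0},
\end{equation*}
and the whole claim reduces to showing that this partial derivative vanishes for every $k \in \{1,\dots,m\}$.

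The intuition from Theorem~\ref{thm: augmentation implies constraint sat} is that when $z_i = 0$ the augmented system loses actuation in the direction $w_i$, so infinitesimal perturbations in $w_i$ should not affect the output at order $\sigma_k$ or lower. To formalise this, I would prove by induction on $j \in \{0,1,\ldots,\sigma_k - 1\}$ that each iterated Lie derivative admits a decomposition of the form
\begin{equation*}
L_{f_A}^{j} h_k(t,x_A) \;=\; P_j^k(t,x,\tilde z_{\neq i}) \;+\; z_i\, R_j^{k}(t,x,\tilde z),
\end{equation*}
where $\tilde z_{\neq i}$ collects only the slack variables and derivatives associated with chains other than the $i$-th. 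Such a decomposition implies that differentiation by $z_i^{(\rho_i-1)}$ kills the first summand and leaves a residual factor of $z_i$ in the second, so both vanish on $\{z_i = 0\}$. The base case $j = 0$ is immediate since $h_k$ depends only on $(t,x)$, and the inductive step reduces to analysing how differentiation along the top block $f - g\Omega_g^{-1}\Omega_f$ of $f_A$ and along the integrator chains (which merely shift slack derivatives within each chain) preserves this structure.

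The principal obstacle is the cross coupling that $\Omega_f$ introduces between the $i$-th slack chain and the original state. Specifically, the bilinear quantity $S_{i,\rho_i-1}$ appearing in $\Omega_{f,i}$ depends on $z_i^{(1)}, \dots, z_i^{(\rho_i-1)}$ without carrying any factor of $z_i$, so a naive induction fails: a single application of $L_{f_A}$ can already generate terms involving $z_i^{(\rho_i-1)}$ with no compensating $z_i$. To close the induction I would strengthen the hypothesis to track the polynomial dependence on each slack derivative separately, and exploit the relative degree identities $L_g L_f^{\ell}\phi_i \equiv 0$ for $\ell < \rho_i - 1$ from Definition~\ref{defn: relative deg} to restrict the way $\Omega_{f,i}$ propagates into the output dynamics before enough derivatives have been taken. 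Once the structural decomposition is established, the $i$-th column of $\Gamma_{g_A}$ vanishes identically at $z_i = 0$, and Corollary~\ref{cor: int system non invert} for the integral augmented system $\Sigma_I$ would follow by the same argument adapted to the extended chain.
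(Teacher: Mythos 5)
Your reduction is correct as far as it goes: $\Gamma_{g_A}$ is square since $r=m$ by Eq.~\eqref{r equal m}, the $i$-th column of $g_A$ in Eq.~\eqref{eq: augmented f and g} does collapse at $z_i=0$ to the basis vector of the $z_i^{(\rho_i-1)}$ coordinate, and the claim does reduce to showing that $\partial L_{f_A}^{\sigma_k-1}h_k/\partial z_i^{(\rho_i-1)}$ vanishes there. You have also put your finger on the genuinely delicate point: the block $-g\Omega_g^{-1}\Omega_f$ of $f_A$ injects the bilinear terms $S_{i,\rho_i-1}$, which depend on $z_i^{(1)},\dots,z_i^{(\rho_i-1)}$ with no compensating factor of $z_i$, so the naive induction on $L_{f_A}^{j}h_k = P_j^k + z_i R_j^k$ does not close. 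The gap is that you stop at naming this obstacle. The fix you gesture at, strengthening the inductive hypothesis and invoking $L_gL_f^{\ell}\phi_i\equiv 0$ for $\ell<\rho_i-1$, is not the right tool: those are relative-degree identities of the \emph{constraint} function and they do nothing to remove $\Omega_f$ from the \emph{output} dynamics. What actually closes the induction is the proposition's own hypothesis~\eqref{eq: local RD}: for $j\le\sigma_k-2$, the identity $L_{g_A}L_{f_A}^{j}h_k\equiv 0$ reads $\frac{\partial L_{f_A}^{j}h_k}{\partial x}\,g\,\Omega_g^{-1}D(z)\equiv 0$; since this holds identically, in particular wherever $D(z)$ is invertible, and $\Omega_g$ is invertible by Assumption~\ref{ass: inverse Omega_g}, it forces $\frac{\partial L_{f_A}^{j}h_k}{\partial x}\,g\equiv 0$, which annihilates the entire $g\Omega_g^{-1}\Omega_f$ contribution to $L_{f_A}^{j+1}h_k$ regardless of how $\Omega_f$ depends on $\tilde z$. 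Without this observation the $S_{i,\rho_i-1}$ terms you worry about genuinely propagate into the Lie derivatives and your decomposition breaks at the first step where $\frac{\partial L_{f_A}^{j}h_k}{\partial x}\,g$ would otherwise be nonzero.

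For comparison, the paper's proof runs the same induction skeleton but establishes the stronger statement that $L_{f_A}^{j}h_k$ is independent of \emph{all} components of $\tilde z$ for $j\le\sigma_k-1$ (everywhere, not just a factorisation modulo $z_i$), after which $L_{g_A}L_{f_A}^{\sigma_k-1}h_k = -\frac{\partial L_{f_A}^{\sigma_k-1}h_k}{\partial x}\,g\,\Omega_g^{-1}D(z)$ manifestly has a zero $i$-th entry whenever $z_i=0$, giving the zero column directly. So your plan is essentially the paper's argument minus the one identity that makes the inductive step go through; supplying it would complete your proof.
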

\begin{proof}
    For any $l \in \{1,\dots,m\}$ we first prove by induction that  $L_{{f}_A}^{i} h_l(t,x_A)$ is
    \ifNJDarticle
    independent of the augmentation/slack variables $\tilde{z}:=[z_1,\ldots,z_1^{(\rho_1-1)},\ldots,z_r,\ldots,z_r^{(\rho_r-1)}]^\top$ for $i \in \{0,\dots \sigma_l-1\}$.
    \else
    independent of the augmented variables $\tilde{z}=[z_1,...,z_1^{(\rho_1-1)},...,z_r,...,z_r^{(\rho_r-1)}]^\top$ for $i \in \{0,\dots \sigma_l-1\}$.
    \fi
    For the case $i=0$ we have $L_{{f}_A}^{0} h_l(t,x_A)=h_l(t,x)$ and therefore is trivially independent of the augmentation variables. Assuming that this holds for $i-1$ we next prove that it holds for $i$. For any $j \in \{1,\dots, r\}$ and $k \in \{0,1,\dots, \rho_l -1\}$ it follows:
    \begin{align*}
       & \frac{\partial L_{f_A}^{i} h_l(t,x_A)}{\partial z_j^{(k)}}=\frac{\partial }{\partial z_j^{(k)}} \left(        \frac{\partial L_{f_A}^{i-1} h_l(t,x_A)}{\partial x_A} f_A(t,x_A)+\frac{\partial L_{f_A}^{i-1} h_l(t,x_A)}{\partial t} \right)\\
        &= \frac{\partial }{\partial z_j^{(k)}} \left(\left[\frac{\partial L_{{f_A}}^{i-1} h_l(t,{x_A})}{\partial  x}, \quad 0_{1 \times \sum_{k=1}^{r}\rho_k} \right]{f_A}(t,{x_A}) \right)\\
        &= \frac{\partial }{\partial z_j^{(k)}} \left(\frac{\partial L_{{f_A}}^{i-1} h_l(t,{x_A})}{\partial  x}  (f(t,x)-g(t,x)\Omega_{g}^{-1}(t,x)\Omega_{f}(t,x) ) \right)=0,
    \end{align*}
where the second equality holds in the above equation since $L_{{f_A}}^{i-1} h_l(t,{x}_A)$ is independent of the augmented variables, $\tilde z$, by the induction hypothesis. The third equality follows by the definition of ${f_A}$ given in Eq.~\eqref{eq: augmented f and g} and the fourth and final equality follows since all terms inside of the partial derivative are independent of $z_j^{(k)}$ for $j \in \{1,\dots, r\}$ and $k \in \{0,1,\dots, \rho_l -1\}$. 

Now, using the definition of the Lie derivative given in Eq.~\eqref{eq: Lie derivatives} and the fact that $L_{{f_A}}^{i} h_l(t,x_A)$ is independent of the augmentation variables, we get that,
\ifNJDarticle
\begin{align} \label{pfeq: LgLf eq}
   L_{ g_A}L_{ f_A}^{\sigma_l-1} h_l(t,x) = \frac{\partial L_{{f_A}}^{\sigma_l-1} h_l(t,{x_A})}{\partial {x_A} } {g_A}(t,{x_A}) &=\left[\frac{\partial L_{{f_A}}^{\sigma_l-1} h_l(t,{x})}{\partial x }, \text{ } 0_{1 \times \sum_{k=1}^{r}\rho_k } \right]  {g_A}(t,{x_A})\\ \nonumber
   &= -\frac{\partial L_{{f_A}}^{\sigma_l-1} h_l(t,{x})}{\partial x_A} g(t,x)\Omega_{g}^{-1}(t,x)D(z) \in \R^{1 \times r},
\end{align}
\else
\begin{align} \label{pfeq: LgLf eq}
   L_{ g_A}L_{ f_A}^{\sigma_l-1} h_l(t,x) &= \frac{\partial L_{{f_A}}^{\sigma_l-1} h_l(t,{x_A})}{\partial {x_A} } {g_A}(t,{x_A}) \\
   &=\left[\frac{\partial L_{{f_A}}^{\sigma_l-1} h_l(t,{x})}{\partial x }, \text{ } 0_{1 \times \sum_{k=1}^{r}\rho_k } \right]  {g_A}(t,{x_A}) \nonumber \\ 
   &= -\frac{\partial L_{{f_A}}^{\sigma_l-1} h_l(t,{x})}{\partial x_A} g(t,x)\Omega_{g}^{-1}(t,x)D(z) \in \R^{1 \times r}, \nonumber
\end{align}
\fi
recalling $D(z):=\diag(z_1,\dots,z_r) \in \R^{r \times r}$. 

If $z_k=0$ for some $k \in \{1,\dots r\}$ then $D(z)$ contains a column and row of zeros. From basic linear algebra, if $A$ and $B$ are any matrices and $A$ has a column of zeros, then $BA$ also has a column of zeros. Hence, it follows $L_{ g_A}L_{ f_A}^{\sigma_l-1} h_l(t,x_A)$ is a row vector with a zero element at component $k \in \N$. Hence, $\Gamma_{{g_A}}(t, x_A):=[L_{ g_A}L_{ f_A}^{\sigma_{1}-1} h_1(t, x_A)^\top,\ldots,L_{ g_A}L_{ f_A}^{\sigma_m-1}  h_m(t, x_A)^\top]^\top$ has a column of zeros and therefore cannot have full column rank. 
\end{proof}

In the next corollary, we extend Proposition~\ref{prop: boundary and invertibility} by proving that the relative degree of the integral augmented system $\Sigma_I$ in Eq.~\eqref{augmented int sys} is one greater than that of the augmented system $\Sigma_A$ in Eq.~\eqref{augmented sys}, whenever the relative degree exists. Furthermore, we demonstrate that Assumption~\ref{ass: inverse Gamma g} fails along the boundary of the state constraint for the integral augmented system, mirroring the result established for the augmented system in Proposition~\ref{prop: boundary and invertibility}.
\begin{cor} \label{cor: int system non invert}
    Suppose there exists $\sigma \in \N^m$ such that Eq.~\eqref{eq: local RD} holds. Then, 
        \begin{align} \nonumber
        &L_{ g_I}L_{ f_I}^{i-1}  h_{k}(t_0, x_I(t_0)) \equiv 0 \text{ for all } 1 \le k \le m \text{ and } 1 \le i \le \sigma_k . \\ \label{eq: local RD int}
        &L_{ g_I}L_{ f_I}^{\sigma_{k}}  h_{k}(t_0, x_I(t_0)) \ne 0 \text{ for all } 1 \le k \le m,
    \end{align}
    where $f_I$ and $g_I$ are defined in the
    \ifNJDarticle
    integral augmented system given in Eq.~\eqref{augmented int sys}. Moreover, the matrix $\Gamma_{ {g_I}}(t, x_I):=[L_{ g_I}L_{ f_I}^{\sigma_{1}}  h_{1}(t, x_I)^\top,\ldots,L_{g_I}L_{ f_I}^{\sigma_m} h_m(t, x_I)^\top]^\top \in \R^{m \times m}$
    \else
    integral augmented system (Eq.~\eqref{augmented int sys}). Moreover, the matrix $\Gamma_{ {g_I}}(t, x_I):=[L_{ g_I}L_{ f_I}^{\sigma_{1}}  h_{1}(t, x_I)^\top,...,L_{g_I}L_{ f_I}^{\sigma_m} h_m(t, x_I)^\top]^\top \in \R^{m \times m}$ 
    \fi
    does not have full column rank when $z_i=0$ for any $i \in \{1,\dots r\}$. 
\end{cor}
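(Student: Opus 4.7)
The plan is to reduce the corollary to Proposition~\ref{prop: boundary and invertibility} by exploiting the special structure of the integrator extension, namely that $g_I$ consists of a block of zeros on top of the identity. This immediately implies $L_{g_I}\psi(t,x_I)=\partial_\xi \psi(t,x_I)$, so computing $L_{g_I}$-derivatives reduces to checking whether functions depend on $\xi$. Thus the task becomes understanding how $\xi$-dependence enters the iterated Lie derivatives $L_{f_I}^j h_k$.

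First I would prove by induction on $j$ that
\begin{equation*}
L_{f_I}^{j} h_k(t,x_I) = L_{f_A}^{j} h_k(t,x_A) \qquad \text{for } j\in\{0,1,\dots,\sigma_k-1\}.
\end{equation*}
The base case is trivial since $h_I = h$. For the inductive step, suppose the claim holds for $j-1$. Then $L_{f_I}^{j-1}h_k$ is independent of $\xi$, so its gradient with respect to $x_I$ has zeros in the $\xi$-block. Using $f_I=[f_A+g_A s_\beta(\xi);\,0_{m\times 1}]$ together with the time-derivative term and the chain rule gives
\begin{equation*}
L_{f_I}^{j}h_k = L_{f_A}^{j}h_k + \bigl(L_{g_A}L_{f_A}^{j-1}h_k\bigr)\, s_\beta(\xi).
\end{equation*}
The correction vanishes exactly because the hypothesis~\eqref{eq: local RD} forces $L_{g_A}L_{f_A}^{j-1}h_k(t_0,x_A(t_0))\equiv 0$ for $j-1\le \sigma_k-2$, completing the induction. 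Consequently $L_{g_I}L_{f_I}^{i-1}h_k$ for $i\in\{1,\dots,\sigma_k\}$ is the $\xi$-derivative of a $\xi$-independent object, hence zero, establishing the first line of~\eqref{eq: local RD int}.

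For the nonvanishing claim at order $\sigma_k$, applying the same identity one more step yields
\begin{equation*}
L_{f_I}^{\sigma_k}h_k = L_{f_A}^{\sigma_k}h_k + \bigl(L_{g_A}L_{f_A}^{\sigma_k-1}h_k\bigr)\, s_\beta(\xi),
\end{equation*}
where only the second summand depends on $\xi$. Differentiating with respect to $\xi$ and using $s_\beta$'s diagonal, component-wise structure, the Jacobian $\partial_\xi s_\beta(\xi)$ is diagonal with entries $s'_{\beta,i}(\xi_i)>0$ by~\eqref{eq: sigma}. Therefore
\begin{equation*}
L_{g_I}L_{f_I}^{\sigma_k}h_k(t,x_I) \;=\; \bigl(L_{g_A}L_{f_A}^{\sigma_k-1}h_k(t,x_A)\bigr)\, \diag\!\bigl(s'_{\beta,1}(\xi_1),\dots,s'_{\beta,r}(\xi_r)\bigr),
\end{equation*}
which, evaluated at $(t_0,x_I(t_0))$, is nonzero by~\eqref{eq: local RD}, yielding the second line of~\eqref{eq: local RD int}.

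Stacking rows, this identity reads $\Gamma_{g_I}(t,x_I) = \Gamma_{g_A}(t,x_A)\,\diag(s'_{\beta,1}(\xi_1),\dots,s'_{\beta,r}(\xi_r))$. Multiplication on the right by an invertible diagonal matrix preserves the column-span, so the column-rank of $\Gamma_{g_I}$ equals that of $\Gamma_{g_A}$. Proposition~\ref{prop: boundary and invertibility} shows $\Gamma_{g_A}$ contains a zero column whenever some $z_i=0$, and scaling that zero column by $s'_{\beta,i}(\xi_i)$ leaves it zero, so $\Gamma_{g_I}$ likewise fails to be of full column rank on the constraint boundary. The main technical subtlety is ensuring the inductive cancellation of $s_\beta(\xi)$ works uniformly up to order $\sigma_k-1$; everything else is bookkeeping that directly transfers the Proposition~\ref{prop: boundary and invertibility} rank deficiency through the diagonal factor $\partial_\xi s_\beta$.
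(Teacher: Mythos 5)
Your proposal is correct and follows essentially the same route as the paper: an induction showing $L_{f_I}^{j}h_k \equiv L_{f_A}^{j}h_k$ for $j\le\sigma_k-1$ (with the $s_\beta(\xi)$ correction killed by Eq.~\eqref{eq: local RD}), the identity $L_{g_I}L_{f_I}^{\sigma_k}h_k = L_{g_A}L_{f_A}^{\sigma_k-1}h_k\,\diag(\partial_\xi s_\beta)$, and transfer of the zero column of $\Gamma_{g_A}$ from Proposition~\ref{prop: boundary and invertibility} through the invertible diagonal factor. The observation that $L_{g_I}\psi=\partial_\xi\psi$ is a clean way to package the paper's explicit block computation, but it does not change the argument.
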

\begin{proof}
We first show by induction that for all $i \in \{0, \dots, \sigma_l -1\}$ and $l \in \{1,\dots m\}$:
\begin{align} \label{pfeq: no change in RD}
    L_{f_I}^i h_l(t,x_I) \equiv L_{f_A}^i h_l(t,x_A)
\end{align}
For the case $i=0$ we have that $L_{f_I}^0 h_l(t,x_I)= h_l(t,x)= L_{f_A}^0 h_l(t,x_A)$ and hence Eq.~\eqref{pfeq: no change in RD} is clearly true for $i=0$. Now, assuming Eq.~\eqref{pfeq: no change in RD} to be true for $i-1$ we show it to be true for $1 \le i \le \sigma_l-1$ in the following:
\ifNJDarticle
    \begin{align} \label{pfeq: int lie deriv}
        L_{f_I}^i h_l(t,x_I)&=\frac{\partial L_{f_I}^{i-1} h_l(t,x_I)}{\partial x_I} f_I(t,x_I)+\frac{\partial L_{f_I}^{i-1} h_l(t,x_I)}{\partial t}\\ \nonumber
        &=\left[ \frac{\partial L_{f_I}^{i-1} h_l(t,x_I)}{\partial x_A}, \text{ } \frac{\partial L_{f_I}^{i-1} h_l(t,x_I)}{\partial \xi} \right] \begin{bmatrix}
        f_A(t,x_A) + g_A(t,x_A)s_\beta(\xi) \\ 0_{r \times 1} 
    \end{bmatrix} +\frac{\partial L_{f_I}^{i-1} h_l(t,x_I)}{\partial t}\\ \nonumber
    &= \frac{\partial L_{f_A}^{i-1} h_l(t,x_A)}{\partial x_A}f_A(t,x_A) +\frac{\partial L_{f_A}^{i-1} h_l(t,x_A)}{\partial x_A}g_A(t,x_A)s_\beta(\xi) + \frac{\partial L_{f_A}^{i-1} h_l(t,x_A)}{\partial t}\\ \nonumber
    &= L_{f_A}^i h_l(t,x_A)+ L_{g_A}L_{f_A}^{i-1} h_l(t,x_A)s_\beta(\xi)=L_{f_A}^i h_l(t,x_A).
    \end{align}
    \else
    \begin{align} \label{pfeq: int lie deriv}
        L_{f_I}^i h_l(t,x_I)=&\frac{\partial L_{f_I}^{i-1} h_l(t,x_I)}{\partial x_I} f_I(t,x_I)+\frac{\partial L_{f_I}^{i-1} h_l(t,x_I)}{\partial t}  \\ 
        =&\left[ \frac{\partial L_{f_I}^{i-1} h_l(t,x_I)}{\partial x_A}, \text{ } \frac{\partial L_{f_I}^{i-1} h_l(t,x_I)}{\partial \xi} \right] \begin{bmatrix}
        f_A(t,x_A) + g_A(t,x_A)s_\beta(\xi) \\ 0_{r \times 1} 
    \end{bmatrix} \nonumber \\ 
    &+\frac{\partial L_{f_I}^{i-1} h_l(t,x_I)}{\partial t} \nonumber \\ 
    =& \frac{\partial L_{f_A}^{i-1} h_l(t,x_A)}{\partial x_A}f_A(t,x_A) +\frac{\partial L_{f_A}^{i-1} h_l(t,x_A)}{\partial x_A}g_A(t,x_A)s_\beta(\xi) \nonumber \\
    &+ \frac{\partial L_{f_A}^{i-1} h_l(t,x_A)}{\partial t} \nonumber \\ 
    =& L_{f_A}^i h_l(t,x_A)+ L_{g_A}L_{f_A}^{i-1} h_l(t,x_A)s_\beta(\xi)=L_{f_A}^i h_l(t,x_A). \nonumber
    \end{align}
    \fi
    Where the first equality of Eq.~\eqref{pfeq: int lie deriv} follows from the definition of the Lie derivative given in Eq.~\eqref{eq: Lie derivatives}. The second equality follows from the definition of $x_I$ in Eq.~\eqref{eq: aug int state space} and $f_I$ in Eq.~\eqref{augmented int sys}. The third equality follows from the induction hypothesis that $L_{f_I}^{i-1} h_l(t,x_I) \equiv L_{f_A}^{i-1} h_l(t,x_A)$. The fourth equality follows from the definition of the Lie derivative given in Eq.~\eqref{eq: Lie derivatives}. Finally, the fifth equality follows from Eq.~\eqref{eq: local RD}, that is $L_{g_A}L_{f_A}^{i-1} h_l(t,x_A)=0$. Hence, Eq.~\eqref{pfeq: no change in RD} follows by induction.

Moreover, by a similar argument to Eq.~\eqref{pfeq: int lie deriv}, it follows that
\begin{align} \label{pfew:RD int}
    L_{f_I}^{\sigma_l } h_l(t,x_I)=L_{f_A}^{\sigma_l-1} h_l(t,x_A)+ L_{g_A}L_{f_A}^{\sigma_l-1} h_l(t,x_A)s_\beta(\xi).
\end{align}

    Now, 
    \begin{align} \label{pfeq: LgLf int}
         L_{g_I}L_{f_I}^{\sigma_l } h_l(t,x_I)&=\frac{\partial L_{f_I}^{\sigma_l } h_l(t,x_I)}{\partial x_I}g_I(t,x_I)\\ \nonumber
         &=\frac{\partial}{\partial x_I} \left(L_{f_A}^{\sigma_l} h_l(t,x_A)+ L_{g_A}L_{f_A}^{\sigma_l-1} h_l(t,x_A)s_\beta(\xi) \right) \begin{bmatrix}
        0_{n_A \times m} \\ 1_{m \times m} \end{bmatrix}\\ \nonumber
        &= L_{g_A}L_{f_A}^{\sigma_l-1} h_l(t,x_A) \diag \left( \frac{\partial}{\partial \xi} s_\beta(\xi) \right)  \\ \nonumber
        & = -\frac{\partial L_{{f_A}}^{\sigma_l-1} h_l(t,{x})}{\partial x_A} g(t,x)\Omega_{g}^{-1}(t,x)D(z) \diag \left( \frac{\partial}{\partial \xi} s_\beta(\xi) \right).
   \end{align}
Where the first equality of Eq.~\eqref{pfeq: LgLf int} follows from the definition of the Lie derivative given in Eq.~\eqref{eq: Lie derivatives}. The second equality follows by applying Eq.~\eqref{pfew:RD int} and the definition of $g_I$ given in Eq.~\eqref{augmented int sys}. The third equality follows from the definition of $x_I$ given in Eq.~\eqref{eq: aug int state space}. The fourth equality follows from Eq.~\eqref{pfeq: LgLf eq} of Proposition~\ref{prop: boundary and invertibility}.

\ifNJDarticle
Clearly, by Eq.~\eqref{pfeq: LgLf int} we have that $L_{g_I}L_{f_I}^{\sigma_l } h_l(t,x_I)=L_{g_A}L_{f_A}^{\sigma_l-1} h_l(t,x_A) \diag\left( \frac{\partial}{\partial \xi} s_\beta(\xi) \right) \ne 0$
\else
Clearly, by the third equality of Eq.~\eqref{pfeq: LgLf int}, we have that $L_{g_I}L_{f_I}^{\sigma_l } h_l(t,x_I)=L_{g_A}L_{f_A}^{\sigma_l-1} h_l(t,x_A) \diag\left( \frac{\partial}{\partial \xi} s_\beta(\xi) \right) \ne 0$
\fi
   through application of Eq.~\eqref{eq: local RD} and the fact that the derivative of the sigmoid function, $s_\beta$, is strictly positive everywhere. This shows Eq.~\eqref{eq: local RD int}. 

   Finally, it is clear that $\Gamma_{ {g_I}}$ does not have full column rank whenever $z_i=0$ by the same argument used in Proposition~\ref{prop: boundary and invertibility} due to the $D(z)$ term given in Eq.~\eqref{pfeq: LgLf int}
\end{proof}


\textbf{Ill-Defined Relative Degree in Integral Control Structure Dynamics:} As established in Corollary~\ref{cor: int system non invert}, \(\Gamma_{g_I}(t,x_I)\) is non-invertible along the state constraint boundary. The following lemma further shows that its invertibility also depends on the integral parameter \(\beta\), becoming singular as \(\beta \to 0\).
\begin{lem} \label{lem: beta makes Gamma_{gI} singular}
    Consider the integral augmented system $\Sigma_I$ defined in Eq.~\eqref{augmented int sys}. Suppose there exists $\sigma \in \N^m$ such that Eq.~\eqref{eq: local RD} holds. Then, $\Gamma_{g_I}(t,x_I)$ becomes singular as $\beta \to 0$.
\end{lem}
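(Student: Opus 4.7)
The plan is to exploit the explicit formula for $L_{g_I}L_{f_I}^{\sigma_l}h_l(t,x_I)$ already derived in the proof of Corollary~\ref{cor: int system non invert}, specifically Eq.~\eqref{pfeq: LgLf int}, and then show that the dependence on $\beta$ factors out linearly, forcing $\det \Gamma_{g_I} \to 0$ as $\beta \to 0$.

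First, I would invoke Eq.~\eqref{pfeq: LgLf int}, under the hypothesis that Eq.~\eqref{eq: local RD} holds, to write each row of $\Gamma_{g_I}(t,x_I)$ as
\begin{align*}
    L_{g_I}L_{f_I}^{\sigma_l}h_l(t,x_I)
    &= L_{g_A}L_{f_A}^{\sigma_l-1}h_l(t,x_A)\,\diag\!\left(\tfrac{\partial}{\partial\xi}s_\beta(\xi)\right).
\end{align*}
Next, I would compute $\tfrac{\partial}{\partial\xi}s_\beta(\xi)$ directly from the definition in Eq.~\eqref{eq: sigma}, obtaining componentwise $\tfrac{\partial}{\partial\xi_i}s_{\beta,i}(\xi_i)=\tfrac{2\beta e^{-\xi_i}}{(e^{-\xi_i}+1)^2}$. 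The key observation is that this derivative is \emph{linear} in $\beta$, so we can write $\diag(\partial s_\beta/\partial \xi) = \beta\,D_\sigma(\xi)$, where $D_\sigma(\xi):=\diag\!\left(\tfrac{2e^{-\xi_i}}{(e^{-\xi_i}+1)^2}\right)$ is independent of $\beta$ and has entries bounded uniformly in $\xi$.

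Substituting back, each row of $\Gamma_{g_I}(t,x_I)$ carries a factor of $\beta$, so
\begin{align*}
    \Gamma_{g_I}(t,x_I) = \beta\, M(t,x_I),
\end{align*}
where $M(t,x_I)\in\R^{m\times m}$ is the $\beta$-independent matrix whose $l$-th row is $L_{g_A}L_{f_A}^{\sigma_l-1}h_l(t,x_A)\,D_\sigma(\xi)$. Taking determinants gives $\det \Gamma_{g_I}(t,x_I) = \beta^m \det M(t,x_I)$, which tends to $0$ as $\beta\to 0$ at any fixed $(t,x_I)$. Hence $\Gamma_{g_I}(t,x_I)$ becomes singular in the limit, completing the argument.

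I do not anticipate a genuine obstacle here: all the heavy lifting (the identity relating $L_{g_I}L_{f_I}^{\sigma_l}h_l$ to $L_{g_A}L_{f_A}^{\sigma_l-1}h_l$ via $\diag(\partial s_\beta/\partial\xi)$) was already done in Corollary~\ref{cor: int system non invert}. The only subtlety is being precise about the sense in which singularity occurs: the claim is pointwise in $(t,x_I)$ as $\beta\to 0$, not that $\Gamma_{g_I}$ is singular for some fixed positive $\beta$. If one wanted a quantitative version, the factorisation $\Gamma_{g_I}=\beta M$ also yields $\|\Gamma_{g_I}^{-1}\|=O(\beta^{-m})$, which already motivates the ill-conditioning discussion in Section~\ref{subec: FBL on aug_sys}.
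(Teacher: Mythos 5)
Your proposal is correct and follows essentially the same route as the paper's own proof: both invoke Eq.~\eqref{pfeq: LgLf int} to expose the factor $\diag\!\left(\tfrac{\partial}{\partial\xi}s_\beta(\xi)\right)$ in each row, observe that this derivative is linear in $\beta$, and factor $\Gamma_{g_I}=\beta M$ with $M$ independent of $\beta$. Your explicit determinant computation $\det\Gamma_{g_I}=\beta^m\det M$ and the remark on the pointwise sense of the limit are slightly more careful than the paper's statement but do not constitute a different argument.
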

\begin{proof}
For system $\Sigma_I$~\eqref{augmented int sys}, Eq.~\eqref{pfeq: LgLf int} shows that each row of $\Gamma_{g_I}$, specifically $L_{g_I}L_{f_I}^{\sigma_l } h_l(t,x_I)$ for $1 \leq l \leq m$, contains the term $\diag\!\left( \frac{\partial}{\partial \xi} s_\beta(\xi) \right)$, where $s_\beta$ is defined in Eq.~\eqref{eq: sigma}. Since  
\[\frac{\partial}{\partial \xi} s_{\beta,i}(\xi) = \frac{2 \beta e^{-\xi}}{(e^{-\xi}+1)^2},
\]  
each derivative carries a factor of $\beta > 0$, implying  
\[
\Gamma_{g_I}(t,x_I) = \beta \tilde{\Gamma}_{g_I}(t,x_I),
\]  
where $\tilde{\Gamma}_{g_I}$ is independent of $\beta$. Thus, as $\beta \to 0$, $\Gamma_{g_I}(t,x_I)$ becomes singular.  
\end{proof}

\end{document}